\newtheorem{theorem}{Theorem}[section]
\newtheorem{proposition}{Proposition}[section]
\newtheorem{definition}{Definition}
\newtheorem{lemma}[theorem]{Lemma}
\theoremstyle{definition}
\newcommand{\dof}{\backslash}
\title{Regular Matroids with Graphic Cocircuits}
\author{
Konstantinos Papalamprou
\institute{Operational Research Group,
Department of Management \\
London School of Economics,
London, UK}
\email{k.papalamprou@lse.ac.uk}
\and
Leonidas Pitsoulis
\institute{Department of Mathematical and Physical Sciences,\\
Aristotle University of Thessaloniki,
Thessaloniki, Greece}
\email{pitsouli@gen.auth.gr}
}
\begin{document}
\maketitle

\begin{abstract}
We introduce the notion of graphic cocircuits and show that a large class of regular matroids with graphic cocircuits belongs to the class of signed-graphic matroids. Moreover, we provide an algorithm which determines whether a cographic matroid with graphic cocircuits is signed-graphic or not.
\end{abstract}

\section{Introduction}
In this paper we examine the effect of removing cocircuits from regular matroids and we focus on the case in which such a removal always results in a graphic matroid. The first main result, given in section~\ref{sec_excl_mi}, is that a regular matroid with graphic cocircuits is signed-graphic if and only if it does not contain two specific minors. This provides a useful connection between graphic, regular and signed-graphic matroids which may be further utilised for devising combinatorial recognition algorithms for certain classes of matroids. At this point we should note that decomposition theories and recognition algorithms for matroids have provided some of the most important results of matroid theory and combinatorial optimization (see e.g. decomposition of graphic matroids~\cite{Tutte:1960} and recognition of network matrices~\cite{BixCunn:1980} and decomposition of regular matroids~\cite{Seymour:1980} and recognition of totally unimodular matrices~\cite{Seymour:95}). Finally, in section~\ref{sec_recal} we provide a simple recognition algorithm which determines whether a cographic matroid with graphic cocircuits is signed-graphic or not.  

\section{Preliminaries} \label{sec_preliminaries}

In this section we will mention all the necessary definitions and preliminary results regarding graphs, signed graphs and their
corresponding matroids. The definitions for graphs presented in this section are taken from~\cite{Diestel:05,Tutte:01},
while for signed graphs from~\cite{Zaslavsky:1982,Zaslavsky:1989}. The main reference for matroid theory is the book of
Oxley~\cite{Oxley:06} while the main reference for signed-graphic matroids is~\cite{Zaslavsky:1991a}.

\subsection{Graphs} \label{subsec_graphs}

A graph $G:=(V,E)$ is defined as a finite set of vertices $V$, and a set of
edges $E\subseteq V\cup V^{2}$ where identical elements are allowed. 
Therefore we will have four types of edges: $e=\{u, v \}$ is called
a \emph{link}, $e=\{v, v \}$ a \emph{loop}, $e=\{v\}$ a \emph{half edge}, 
while $e=\emptyset$ is a \emph{loose edge}. Whenever applicable, the vertices
that define an edge are called its \emph{end-vertices}. We say that a vertex 
$v$ of a graph $G$ is \emph{incident with}  an edge $e$ of $G$ and that $e$ is \emph{incident with} $v$ if $v\in e$. We also say that two vertices $u$ and $v$ of $G$ are \emph{adjacent}  or that $u$ is \emph{adjacent to $v$} if $\{u,v\}$ is an edge of $G$.
Observe that the above is the ordinary definition of a graph, except that we
also allow half edges and loose edges. We will denote the set of vertices and the set of edges
of a graph $G$ by $V(G)$ and $E(G)$, respectively.

In what follows we will assume that we have a graph $G$. 
The following operations are defined. 
We say that $G'$ is a \emph{subgraph} of $G$, denoted by
$G'\subseteq{G}$, if $V(G')\subseteq{V(G)}$ and
$E(G')\subseteq{E(G)}$. 
For some $X\subseteq V(G)$ the subgraph \emph{induced} by $X$ is defined 
as $G[X] := (X, E')$, where $E'\subseteq E(G)$ is a maximal set of edges with all
end-vertices in $X$. If $X\subseteq E(G)$ then $G[X] := (V',X)$, where $V'\subseteq V(G)$
is the set of end-vertices of edges in $X$.
The \emph{deletion of an edge} $e$ from $G$ is the subgraph defined as 
$G\dof e := (V(G), E-e)$. \emph{Identifying}  two vertices $u$ and $v$ is the operation 
where we replace $u$ and $v$ with a new vertex $v'$ in both $V(G)$ and $E(G)$. 
The \emph{contraction of a link} $e=\{u,v\}$ is the subgraph, denoted by
$G/e$, which results from $G$ by identifying $u,v$ in $G\dof e$.
The \emph{contraction of a half edge} $e=\{v\}$ or a \emph{loop} $e=\{v,v\}$ is the subgraph, denoted by
$G/e$, which results from the removal of $\{v\}$ and all half edges and loops incident with it, while all links
incident with $v$ become half edges at their other end-vertex. Contraction of a loose edge is the same as deletion. 
The \emph{deletion of a vertex} $v$ of $G$ is defined as the deletion of all edges incident with $v$ and 
the deletion of $v$ from $V(G)$.  
A graph $G'$ is called a \emph{minor} of $G$ if it is obtained from a sequence of deletions and contractions 
of edges and deletions of vertices of $G$.

Two graphs $G$ and $H$ are called \emph{isomorphic}, and we write $G\cong H$, if there exists
a bijection $p:V(G)\rightarrow V(H)$ such that $\{u,v\} \in E(G)$ if and only if $\{p(u),p(v)\}\in E(H)$. A \emph{walk} in $G$ is a sequence  $(v_1, e_1, v_2, e_2,\ldots, e_{t-1}, v_t)$ where 
$e_i$ is incident with both $v_i$ and $v_{i+1}$.  If $v_1=v_t$, then we say that the walk is \emph{closed}. 
If a walk has distinct inner vertices, then it is called a \emph{path}.  The subgraph  of $G$ induced by the edges of a closed path is called a \emph{cycle}. The edge set of a cycle of $G$ is called a \emph{circle} of $G$. A graph is called a \emph{wheel graph}, if it consists of a cycle along with a vertex which is adjacent to every vertex of the cycle. 

A graph is \emph{connected} if there is a walk between any pair of its vertices. 
There are several notions of  higher connectivity in graphs that have appeared in the literature.  
Here we will define  \emph{Tutte $k$-connectivity} which we shall call simply \emph{$k$-connectivity}. 
For $k\geq 1$, a \emph{$k$-separation} of a connected graph $G$ is a partition $(A,B)$ of the edges 
such that $\min\{|A|,|B|\}\geq{k}$ and $|V({G[A]}) \cap V({G[B]})|=k$. 
For $k\geq 2$, we say that $G$ is \emph{$k$-connected}  
if $G$ does not have an $l$-separation for $l=1,\ldots,k-1$. 
Note that our notion of $k$-connectivity of a graph is taken from Tutte's graph theory book~\cite{Tutte:01} which is different from the notion of $k$-connectivity we find in other graph theory books, e.g.~\cite{BonMur:07,Diestel:05}. We use $k$-connectivity as defined above in this paper, due to the fact that the connectivity of a graph and its corresponding graphic matroid coincide under this definition.

\subsection{Signed graphs}\label{subsec_signed_graphs}

A \emph{signed graph} is defined as $\Sigma := (G,\sigma)$ where $G$ is a graph called the \emph{underlying graph} and
$\sigma$ is a sign function $\sigma:E(G)\rightarrow \{\pm 1\}$, where 
$\sigma(e) =-1$ if $e$ is a half edge and  $\sigma(e) =+1$ if $e$ is a loose edge. 
Therefore a signed graph is a graph where the  edges are labelled as  positive or  negative, while all the half edges
are negative and all the loose edges are positive. 
We denote by $V(\Sigma)$ and $E(\Sigma)$ the vertex set and edge set of a signed graph $\Sigma$, respectively. 

 The \emph{sign of a cycle} is the product of the signs of its edges, so we have a 
\emph{positive cycle} if the number of negative edges in the cycle is  even, otherwise the cycle is a \emph{negative cycle}.
Both negative loops and  half-edges are negative cycles. A signed graph is 
called \emph{balanced}  if it contains no negative  cycles. Finally, although signed graphs have been studied extensively, it is out of the scope of this work to provide more notions, definitions and results regarding signed graphs. However, the interested reader is referred to ~\cite{Zaslavsky:1982,Zaslavsky:1991a,Zaslavsky:1999}

\subsection{Matroids}
\begin{definition} \label{def_ntefia}
A matroid $M$ is an ordered pair $(E,\mathcal{I})$  of a finite set $E$  and a collection $\mathcal{I}$ of subsets of $E$ satisfying the following three conditions:
\begin{itemize}
\item [(I1)] $\emptyset \in{\mathcal{I}}$
\item[(I2)] If $X\in{\mathcal{I}}$ and $Y\subseteq{X}$ then $Y\in{\mathcal{I}}$ 
\item[(I3)] If $U$ and $V$ are members of $\mathcal{I}$ with $|U|<|V|$ then there exists $x\in{V-U}$ such that $U\cup{x}\in{\mathcal{I}}$. 
\end{itemize}
\end{definition}

Given a matroid $M=(E,\mathcal{I})$, the set $E$ is called the \emph{ground set} of $M$ and the members of $\mathcal{I}$  are the \emph{independent sets} of $M$; furthermore, any subset of $E$ not in $\mathcal{I}$ is called a $\emph{dependent set}$ of $M$. A  minimal dependent set is called a \emph{circuit} of $M$. The \emph{rank function} $r_M:2^{E}\rightarrow{\mathbb{Z_+}}$ of a matroid $M$ is a function defined by: $r_M(A)=max(|X|:X\subseteq{A},X\in{\mathcal{I}})$, where $A\subseteq{E}$ and $|A|$ is the cardinality of $A$.
The axiomatic Definition~\ref{def_ntefia} for a matroid on a given ground set uses its independent sets. However, there are several equivalent ways to define a matroid which can be found in \cite{Oxley:06}. For example, a matroid $M$ on a given ground set $E$ can be defined through its rank function or through its set of circuits. We provide here the following axiomatisation of a matroid by its circuits~\cite{Oxley:06}:
\begin{proposition}
A collection $\mathcal{C}$ of subsets of $E$ is the collection of circuits of a matroid on $E$ if and only if $\mathcal{C}$ satisfies the following conditions:
 \begin{itemize}
\item [(I1)] $\emptyset \notin{\mathcal{C}}$
\item[(I2)] If $C_1$ and $C_2$ are members of $\mathcal{C}$ and $C_1\subseteq{C_2}$, then $C_1=C_2$. 
\item[(I3)] If $C_1$  and $C_2$ are distinct members of $\mathcal{C}$ and $e\in{C_1\cap{C_2}}$, then there is a $C_3\in{\mathcal{C}}$ such that $C_3\subseteq{(C_1\cup C_2)-e}$.
\end{itemize}
\end{proposition}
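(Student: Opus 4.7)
The plan is to prove both directions of the equivalence, using the standard translation between circuits and independent sets of a matroid. Throughout, write $\mathcal{C}$ for the proposed circuit family and $\mathcal{I}$ for a family of independent sets.

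For the forward direction, assume $M=(E,\mathcal{I})$ is a matroid and $\mathcal{C}$ is its family of circuits. Axiom (I1) is immediate: $\emptyset\in\mathcal{I}$ by axiom (I1) of Definition~\ref{def_ntefia}, so $\emptyset$ is not a minimal dependent set. Axiom (I2) is the very definition of minimality. The interesting axiom is the circuit-elimination (I3); the plan is to argue by contradiction. Assume distinct $C_1,C_2\in\mathcal{C}$ share an element $e$ but $(C_1\cup C_2)-e$ contains no circuit, hence is independent. Because $C_1\ne C_2$, pick $f\in C_1\setminus C_2$. Then $C_1-f$ is independent (minimality of $C_1$), and by the augmentation axiom (I3) of independent sets applied iteratively, extend $C_1-f$ to a maximal independent subset $I$ of $C_1\cup C_2$. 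Since $f\notin I$, we have $|I|\le |C_1\cup C_2|-1$, but $(C_1\cup C_2)-e$ is already independent of size $|C_1\cup C_2|-1$, and $I$ must miss at least one element of each of $C_1$ and $C_2$; a counting/rank comparison then forces a contradiction.

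For the backward direction, assume $\mathcal{C}$ satisfies (I1)--(I3) and define
\[
\mathcal{I}:=\{X\subseteq E : X \text{ contains no member of }\mathcal{C}\}.
\]
The plan is to verify the three independent-set axioms of Definition~\ref{def_ntefia}. Axiom (I1) of $\mathcal{I}$ holds because $\emptyset\notin\mathcal{C}$. Axiom (I2) of $\mathcal{I}$ is immediate, since any $\mathcal{C}$-member contained in a subset of $X$ is also contained in $X$. The hard step is the augmentation axiom (I3). I would argue by contradiction: among all pairs $(U,V)\in\mathcal{I}\times\mathcal{I}$ with $|U|<|V|$ violating augmentation, choose one minimizing $|V\setminus U|$ (equivalently, maximizing $|U\cap V|$). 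Pick $v\in V\setminus U$; by the failure of augmentation, $U\cup\{v\}$ contains a circuit $C_v\in\mathcal{C}$, and necessarily $v\in C_v\subseteq U\cup\{v\}$. Because $V\in\mathcal{I}$, there exists $u\in C_v\setminus V$ (otherwise $C_v\subseteq V$, contradicting $V\in\mathcal{I}$). The goal is to replace $U$ by $U':=(U\cup\{v\})\setminus\{u\}$, show $U'\in\mathcal{I}$ and $|U'\cap V|>|U\cap V|$, contradicting the choice of $(U,V)$. To show $U'\in\mathcal{I}$, assume some $C'\in\mathcal{C}$ satisfies $C'\subseteq U'$; then $C'\ne C_v$ (since $u\in C_v\setminus U'$), $v\in C'$ (since $C'\not\subseteq U$), so $v\in C'\cap C_v$, and circuit-elimination (I3) produces $C''\in\mathcal{C}$ with $C''\subseteq (C'\cup C_v)\setminus\{v\}\subseteq U$, contradicting $U\in\mathcal{I}$.

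Finally I would verify that the matroid $M=(E,\mathcal{I})$ so obtained has $\mathcal{C}$ as its family of circuits: minimal dependent sets of $\mathcal{I}$ are exactly those $X$ which contain some $C\in\mathcal{C}$ but no proper subset of $X$ does, which by (I2) forces $X=C$; conversely each $C\in\mathcal{C}$ is dependent and minimal by (I2). The main obstacle is the augmentation-axiom step in the backward direction; the counterexample-minimization combined with a single application of circuit-elimination is the crucial device, and all other parts are essentially bookkeeping with the definitions.
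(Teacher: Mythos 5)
The paper does not prove this proposition at all --- it is quoted verbatim from Oxley with a citation --- so your attempt can only be measured against the standard argument. Your forward direction is essentially right: the contradiction you gesture at is that the maximal independent set $I\subseteq C_1\cup C_2$ misses $f\in C_1\setminus C_2$ and also some element of $C_2$ (necessarily different from $f$), so $|I|\le |C_1\cup C_2|-2<|(C_1\cup C_2)-e|$, and one application of augmentation inside $C_1\cup C_2$ contradicts maximality of $I$. That punchline is left implicit but all the ingredients are present. The closing verification that $\mathcal{C}$ is exactly the set of minimal dependent sets of the constructed $\mathcal{I}$ is also fine.

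The backward direction, however, has a genuine gap at its final step. You choose $(U,V)$ minimizing $|V\setminus U|$ \emph{among violating pairs}, build $U'=(U\cup\{v\})\setminus\{u\}$, and correctly prove $U'\in\mathcal{I}$, $|U'|=|U|$ and $|V\setminus U'|<|V\setminus U|$ (the circuit-elimination argument showing $U'\in\mathcal{I}$ is exactly right). But to contradict the extremal choice you would need $(U',V)$ to be a \emph{violating} pair as well, and nothing you wrote establishes that; a priori augmentation may simply succeed for $(U',V)$, which yields some $w\in V\setminus U'$ with $(U\setminus\{u\})\cup\{v,w\}\in\mathcal{I}$ --- an independent set of size $|U|+1$ inside $U\cup V$ that does not by itself contradict the failure of augmentation for $(U,V)$. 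Indeed, if the construction were consistent, augmentation would hold for $(U',V)$, so the contradiction you want is not available. The standard repair is to extremize over a family whose membership does not depend on the failure of augmentation: fix the violating pair $(I_1,I_2)$, and among all independent $I_3\subseteq I_1\cup I_2$ with $|I_3|>|I_1|$ choose one minimizing $|I_1\setminus I_3|$; one then modifies the \emph{larger} set $I_3$ (adding $e\in I_1\setminus I_3$ and deleting an element of the resulting circuit outside $I_1$, with a single circuit-elimination showing the result is independent), and the new set automatically stays in the family, giving a clean contradiction. Your elimination step is the right tool, but it is aimed at the wrong extremal object.
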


Two matroids $M_1$ and $M_2$ are called \emph{isomorphic} if there is a bijection $\psi$ from $E(M_1)$ to $E(M_2)$ such that $X\in{\mathcal{I}(M_1)}$ if and only if $\psi{(X)}\in{\mathcal{I}(M_2)}$. We denote that $M_1$ and $M_2$ are isomorphic by $M_1\cong{M_2}$. 

Let $E$ be a finite set of vectors from a vectorspace over some field $F$ and let $\mathcal{I}$ be the collection of all subsets of linearly independent elements of $E$; then it can be proved that $M=(E,\mathcal{I})$ is a matroid called \emph{vector matroid}. Furthermore, any matroid isomorphic to $M$ is called a \emph{representable matroid} over $F$. Matroids representable over the finite field $GF(2)$ are called \emph{binary}  and matroids representable over the finite field $GF(3)$ are called \emph{ternary}. A matroid representable over every field is a \emph{regular} matroid.
Let $A$ be a matrix whose columns are the vectors of the ground set of a vector matroid $M$. It is evident that there is one-to-one correspondence between the linearly independent columns of $A$ and the independent sets of $M$, so the matroid $M$ can be fully characterized by matrix $A$. Matrix $A$ is called a \emph{representation matrix} of $M$ and we  denote the vector matroid with representation matrix $A$ by $M[A]$.
Let $G$ be a graph without loops, half-edges or loose edges and let $\mathcal{C}$ be the collection of edge sets of cycles of $G$. Then it can be shown that the pair $(E(G), \mathcal{C})$ is a matroid called the \emph{cycle matroid} of $G$  and is denoted by $M(G)$. A matroid $M$ such that $M\cong{M(G)}$ is called \emph{graphic}.

Given a matroid $M=(E,\mathcal{I})$, the ordered pair $(E,\{E-{S}:S \notin{\mathcal{I}}\})$ is a matroid called the \emph{dual matroid} of $M$ and denoted by $M^{*}$. There is always a dual matroid $M^{*}$ associated with a matroid $M$ and it is clear that $(M^{*})^{*}=M$. Usually, the prefix 'co' is used to dualize a term. Therefore, the set $\mathcal{C}(M^{*})$ of circuits  of $M^{*}$ is the set of \emph{cocircuits} of $M$. We usually denote the cocircuit of $M$ by $\mathcal{C}^{*}(M)$.

\emph{Deletion} and \emph{contraction} are two fundamental matroid operations. Formally, given a matroid $M=(E, \mathcal{C})$ on a ground set $E$ defined by its collection of circuits $\mathcal{C}$ the \emph{deletion of}  some $T\subseteq E$ from $M$ is the matroid  denoted by $M\backslash T$, on $E\backslash{T}$ with the following collection of circuits:
\begin{equation} \label{eq_nn1}
\mathcal{C}(M\backslash T) := \{ C\in \mathcal{C}(M) | C\cap{T}=\emptyset \}. 
\end{equation} 
The \emph{contraction of} some $T\subseteq E$ is the matroid denoted by $M/ T$, on $E\backslash{T}$ with the following collection of circuits: 
\begin{equation} \label{eq_nn2}
\mathcal{C}(M/ T) := \mbox{minimal}\{ C\backslash{T} | C\in\mathcal{C}(M)\}.
\end{equation}

Furthermore, deletion and contraction may be viewed as dual operations in the sense that the deletion or contraction of a set $T\subseteq{E(M)}$ from $M$ is translated as the contraction or deletion of $T$ from $M^{*}$, respectively. In a symbolic way this is expressed as follows:
\begin{equation} \label{eq_lambra}
M\backslash{T}=(M^{*}/T)^{*} \textrm{ and } M/T=(M^{*}\backslash{T})^{*}
\end{equation}

Any matroid which can be obtained from $M$ by a series of deletions and contractions is called a \emph{minor} of $M$.  If $M$ has a  minor isomorphic to a matroid $N$ then we will often say that $M$ has an \emph{$N$-minor} or $M$ has $N$ \emph{as a minor}. A matroid $N$ is called an \emph{excluded minor} for a class of matroids $\mathcal{M}$  if $N\notin{\mathcal{M}}$ but every proper minor of $N$ is in $\mathcal{M}$. A well-known excluded minor characterization for graphic matroids goes as follows~\cite{Tutte:1965}, where $K_5$ is the complete graph on five vertices and $K_{3,3}$ is the complete bipartite graph having three vertices at each side of the bipartition:
\begin{theorem} \label{th_tttt}
A regular matroid is graphic if and only if it has no minor isomorphic to $M^{*}(K_{3,3})$ or $M^{*}(K_5)$.
\end{theorem}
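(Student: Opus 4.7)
The plan is to treat the two implications separately; the reverse implication is the substantial one.

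For the forward direction I would exploit the fact that the class of graphic matroids is closed under minors, since contracting or deleting an edge in a graph $G$ yields a graph whose cycle matroid is the corresponding minor of $M(G)$. Thus it suffices to show that neither $M^{*}(K_{5})$ nor $M^{*}(K_{3,3})$ is graphic. For this I would invoke Whitney's classical planarity criterion, namely that $M^{*}(H)$ is graphic if and only if $H$ is planar. Since $K_{5}$ and $K_{3,3}$ are the two Kuratowski non-planar graphs, both $M^{*}(K_{5})$ and $M^{*}(K_{3,3})$ fail to be graphic, completing this direction.

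For the reverse direction the natural route is to reduce to Tutte's full excluded-minor characterisation of graphic matroids, which lists five excluded minors in total: $U_{2,4}$, $F_{7}$, $F_{7}^{*}$, $M^{*}(K_{5})$ and $M^{*}(K_{3,3})$. Assume $M$ is regular with no $M^{*}(K_{5})$ or $M^{*}(K_{3,3})$ minor. Because $M$ is in particular binary, it cannot contain the non-binary uniform matroid $U_{2,4}$ as a minor. Moreover, the Fano matroid $F_{7}$ and its dual $F_{7}^{*}$ are not representable over $GF(3)$ and hence are not regular; since regularity is minor-closed, $M$ has no $F_{7}$ or $F_{7}^{*}$ minor either. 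Combined with the hypothesis, $M$ avoids all five of the excluded minors in Tutte's list, so $M$ is graphic.

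The principal obstacle is of course Tutte's general excluded-minor theorem for graphic matroids itself, which is a deep classical result; the regularity hypothesis only makes the elementary observation that $U_{2,4}$, $F_{7}$ and $F_{7}^{*}$ cannot occur. An alternative and more self-contained route would be through Seymour's decomposition theorem for regular matroids, which writes any regular matroid as a sequence of $1$-, $2$- and $3$-sums of graphic matroids, cographic matroids, and copies of $R_{10}$; one can then verify that each non-graphic piece that can appear must harbour $M^{*}(K_{5})$ or $M^{*}(K_{3,3})$ as a minor (noting in particular that $R_{10}$ itself contains both), at which point the result follows by tracking these minors through the sum operations.
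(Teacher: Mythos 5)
The paper does not actually prove this statement: it is quoted directly from Tutte's \emph{Lectures on matroids} (the reference \cite{Tutte:1965} in the bibliography) as a known excluded-minor characterisation, so there is no internal proof to compare yours against. Your derivation is the standard one and is correct. The forward direction follows from the minor-closedness of graphic matroids together with Whitney's criterion that $M^{*}(H)$ is graphic if and only if $H$ is planar, which rules out $M^{*}(K_{5})$ and $M^{*}(K_{3,3})$. The reverse direction correctly reduces to Tutte's full five-element excluded-minor list for graphic matroids, using that a regular matroid is binary (so has no $U_{2,4}$-minor) and representable over $GF(3)$ (so has no $F_{7}$- or $F_{7}^{*}$-minor, regularity being minor-closed). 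One factual slip in your aside about the Seymour-decomposition route: $R_{10}$ has rank $5$ whereas $M^{*}(K_{5})$ has rank $6$, so $R_{10}$ cannot contain $M^{*}(K_{5})$ as a minor; it does contain $M^{*}(K_{3,3})$ (every single-element contraction of $R_{10}$ is isomorphic to $M^{*}(K_{3,3})$), and that single witness is all the alternative argument needs, so the slip is harmless but worth correcting.
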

Consider a matroid $M$ defined by a rank function $r:E(M)\rightarrow \mathbb{Z}$. For some positive integer $k$, a partition $(X,Y)$ of $E(M)$ is called a \emph{$k$-separation} of $M$ if the following two conditions are satisfied:
\begin{itemize}
\item[(M1)] $\min\{|X|,|Y|\}\geq k$, and
\item[(M2)] $r_{M}(X)+r_{M}(Y)-r(M) \leq k-1$.
\end{itemize}
If a matroid $M$ is $k$-separated for some integer $k$ then the \emph{connectivity of a matroid} of $M$ is the smallest integer $j$ for which $M$ is $j$-separated; otherwise, we take the connectivity of $M$ to be infinite.

\subsection{Signed-Graphic Matroids} \label{subsec_sign_graph_matrds}
The definition  of the \emph{signed-graphic matroid} goes as follows~\cite{Zaslavsky:1982}: 
\begin{theorem} \label{th_sgg2}
Given a signed graph $\Sigma$ let $\mathcal{C} \subseteq 2^{E(\Sigma)}$ be the family of minimal edge sets 
inducing a subgraph in $\Sigma$ which is either:
\begin{itemize}
\item[(a)] a positive cycle, or
\item[(b)]  two negative cycles which have exactly one common vertex, or 
\item[(c)] two vertex-disjoint negative cycles connected by a path which has no common vertex with the cycles apart from 
its end nodes.
\end{itemize}
Then $M(\Sigma)=(E(\Sigma), \mathcal{C})$ is a matroid on $E(\Sigma)$ with circuit family $\mathcal{C}$. 
\end{theorem}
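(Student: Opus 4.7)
The plan is to verify the three circuit axioms of the proposition stated earlier. Axioms (I1) and (I2) fall out of the minimality clause in the statement: the empty set is not one of the three listed configurations, and two distinct minimal sets cannot contain one another. The real content is therefore the strong circuit elimination axiom (I3), and I would reduce it to a linear-algebra fact by exhibiting a representation matrix whose vector matroid has $\mathcal{C}$ as its circuit family; given such a matrix, (I3) is inherited for free from the vector matroid.

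The representation I would use is the signed incidence matrix $B(\Sigma)\in\mathbb{R}^{V(\Sigma)\times E(\Sigma)}$: after orienting each edge arbitrarily, a positive link $e=\{u,v\}$ gets a column with $+1$ at $u$ and $-1$ at $v$, a negative link gets $+1$'s in both positions, a half-edge at $v$ gets a single $+1$ at $v$, a negative loop at $v$ gets entry $\pm 2$ at $v$, while positive loops and loose edges contribute zero columns (and thus become matroid loops of $M[B(\Sigma)]$). The goal is to show that the circuits of $M[B(\Sigma)]$ are exactly the edge sets described in (a)--(c).

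For the forward direction I would exhibit explicit kernel vectors supported precisely on each configuration: alternating $\pm 1$ around a positive cycle for (a); $\pm 1$ along each negative cycle together with $\pm 2$ weights on the connecting handle (length zero for (b), positive for (c)); minimality of the support then follows from a rank calculation on proper subsets. For the converse, let $D$ be a minimal support of a nonzero vector in $\ker B(\Sigma)$. Each vertex must have signed-incidence sum zero on $B(\Sigma)|_D$, which rules out pendant edges, so every vertex of $\Sigma[D]$ has degree at least two. Establishing en route Zaslavsky's rank formula $r_{M[B(\Sigma)]}(S)=|V(\Sigma[S])|-b(\Sigma[S])$, where $b$ counts the balanced connected components of $\Sigma[S]$, the circuit identity $|D|=r_{M[B(\Sigma)]}(D)+1$ forces $\Sigma[D]$ to be either a single positive cycle or two negative cycles joined at a vertex or by a path.

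The main obstacle will be this converse step. Ruling out spurious minimal supports -- three or more negative cycles joined by a subtree, cycles sharing more than a single vertex, or components carrying stray half-edges -- requires the parity bookkeeping encoded in $b(\Sigma[S])$ together with a careful, slightly tedious treatment of the degenerate edge types (half-edges, loops, loose edges), in order to certify that no proper subset of $D$ already supports a nonzero kernel vector.
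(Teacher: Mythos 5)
The paper does not actually prove this statement: it is quoted verbatim from Zaslavsky~\cite{Zaslavsky:1982} as the \emph{definition} of the signed-graphic matroid, so there is no in-paper argument to compare against. Judged on its own terms, your route is a legitimate and standard one --- it is essentially Zaslavsky's representation theorem run backwards: exhibit the signed incidence matrix $B(\Sigma)$, show its vector matroid has circuit family exactly $\mathcal{C}$, and inherit the circuit axioms for free. The entry conventions you give (opposite signs on positive links, equal signs on negative links, $\pm 1$ on half edges, $\pm 2$ on negative loops, zero columns for positive loops and loose edges) are correct, the kernel vectors you describe ($\pm 1$ around a positive cycle; $\pm 1$ on the two negative cycles and $\pm 2$ on the connecting path of a handcuff) are the right certificates of dependence, and the converse via the rank formula $r(S)=|V(\Sigma[S])|-b(\Sigma[S])$ together with the minimum-degree and theta-subgraph analysis does close the argument. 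Two cautions. First, this representation only works over a field in which $2\neq 0$ (over $GF(2)$ the handcuff certificates collapse and the matrix represents something else), so you should fix $\mathbb{R}$ or $GF(3)$ explicitly; this is harmless for the theorem but worth stating. Second, you have correctly located where all the work lives --- the converse direction, including proving the rank formula itself and the parity argument that a theta subgraph always contains a positive cycle and hence is never a minimal support --- but as written this is a plan rather than a proof; an alternative that avoids linear algebra entirely is to verify the circuit elimination axiom (I3) directly by case analysis on the types (a)--(c) of $C_1$ and $C_2$, which is closer to how Zaslavsky originally establishes the result for general biased graphs and does not require introducing a field at all.
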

\noindent
The subgraphs of $\Sigma$ induced by the edges corresponding to a circuit of $M(\Sigma)$ are called the 
\emph{circuits} of $\Sigma$. Therefore a circuit of $\Sigma$ can be one of three types (see Figure~\ref{fig_circuits_signed_graphs} for example circuits of types (a), (b) and (c)). 

\begin{figure}[hbtp]
\begin{center}
\mbox{
\subfigure[]
{
\includegraphics*[scale=0.3]{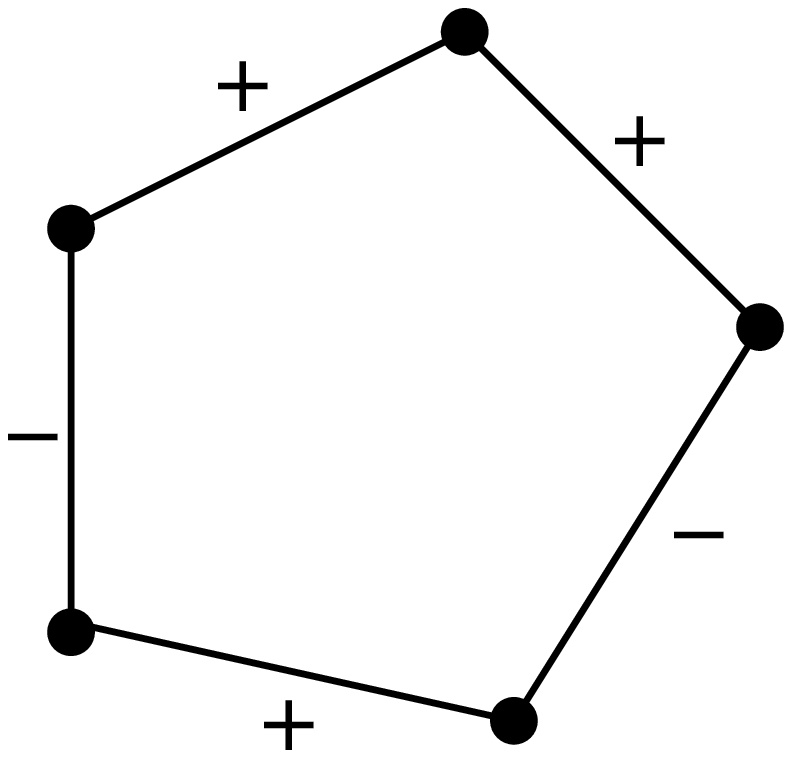}
}\qquad
\subfigure[]
{
\includegraphics*[scale=0.3]{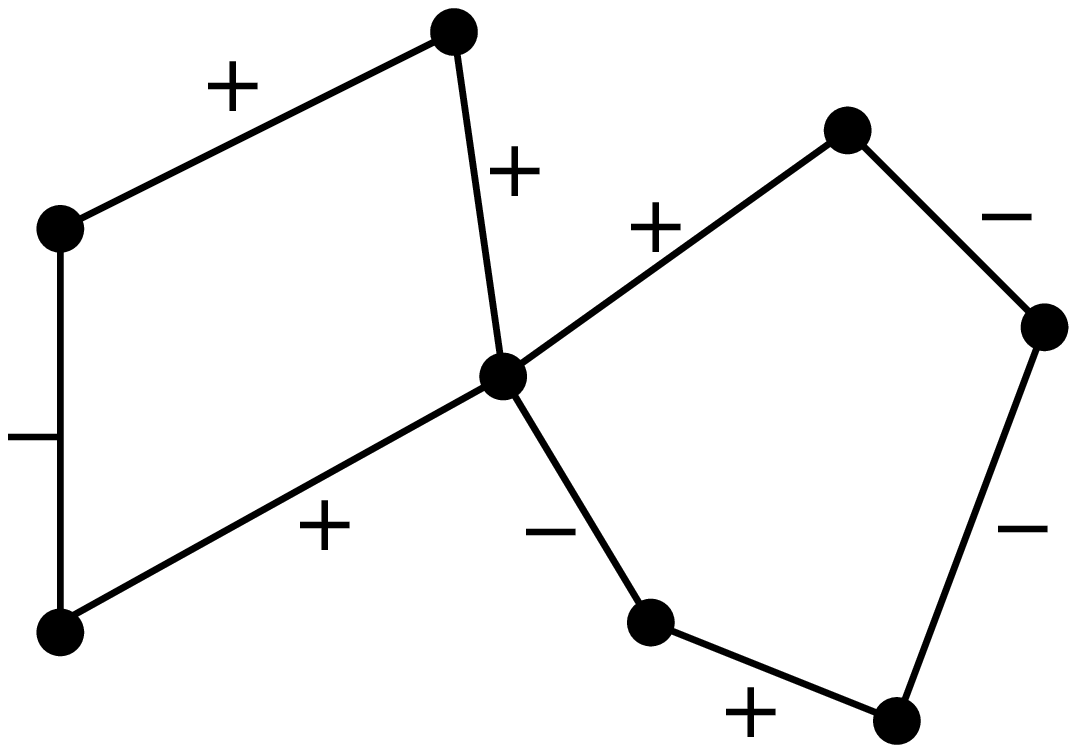}
}\qquad
\subfigure[]
{
\includegraphics*[scale=0.3]{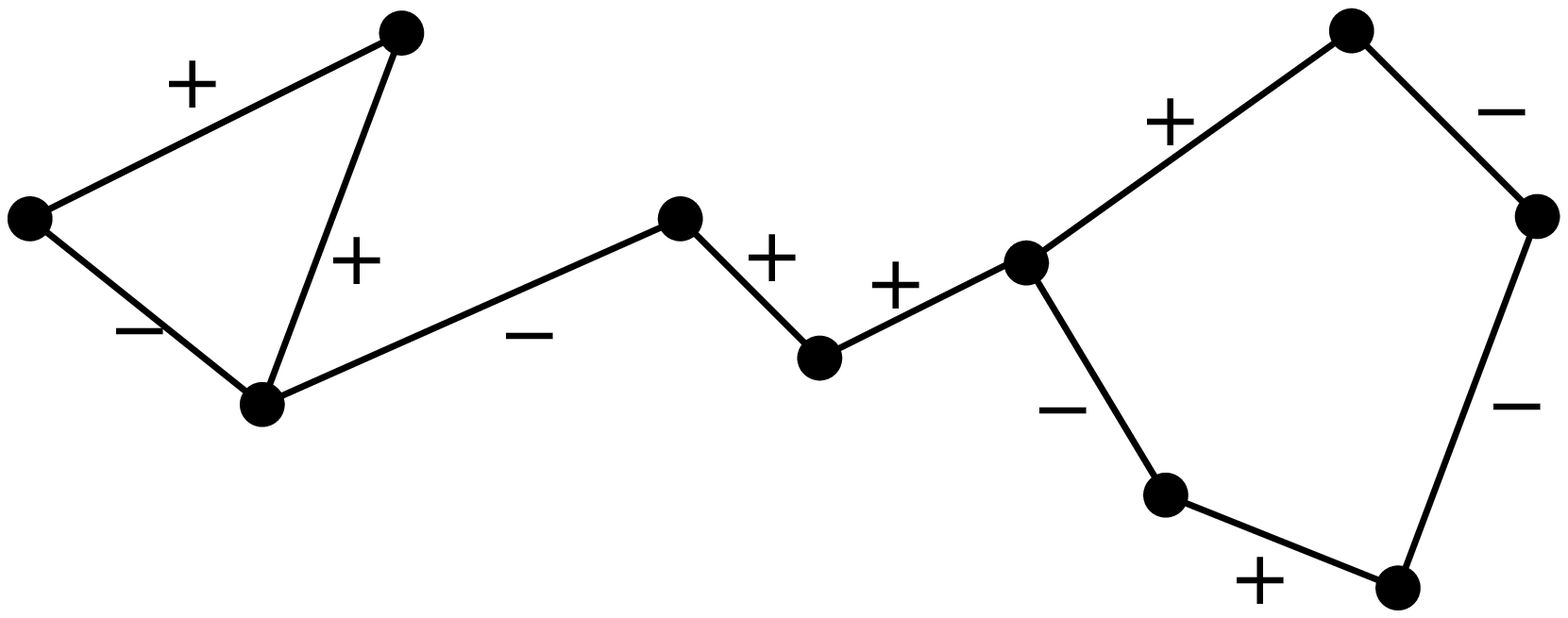}
}} \caption{Circuits in a signed graph $\Sigma$}
\end{center}
\end{figure}\label{fig_circuits_signed_graphs}
  
For each signed graph $\Sigma$ with edge set $E(\Sigma)$, there is an associated signed-graphic
matroid $M(\Sigma)$ on the set of elements $E(\Sigma)$. However for a given signed-graphic matroid $M$ 
there may exist several signed graphs $\Sigma_{i}$ such that $M=M(\Sigma_{i})$ where $i\geq 1$. So 
signed-graphic matroids can be viewed as the abstract entities, while their corresponding signed graphs
their representations in a graphical context.

\section{An excluded minor characterization} \label{sec_excl_mi}
We say that a cocircuit $Y$ of a binary matroid $M$ is \emph{graphic}  if $M\dof Y$ is a graphic matroid; otherwise, we say that  $Y$ is a~\emph{non-graphic} cocircuit.
Two important theorems which associate signed-graphic matroids with cographic matroids and regular matroids in terms of excluded minors have been shown by Slilaty et.al. in~\cite{SliQinZhou:07,Slilaty:2005b}. 
Specifically, of the 35 forbidden minors for projective planar graphs 29 are not 1-separable; these 29 graphs, which we call 
$G_1,G_2,\ldots,G_{29}$, can be found in~\cite{Arch:81, MoTh:01}. Slilaty has shown in~\cite{Slilaty:2005b} that the collection of the cographic matroids of these 29 graphs $\{ M^{*}(G_1),M^{*}(G_2),\ldots,M^{*}(G_{29}) \}$, forms the complete list of the cographic 
excluded minors for signed-graphic matroids. Since cographic matroids is a subclass of regular matroids (see~\cite{Oxley:06}), we 
expect the list of regular excluded minors for signed-graphic matroids to contain the matroids in $\mathcal{M}$ 
and some other matroids. Those other matroids are the $R_{15}$ and $R_{16}$ whose representation matrices over $GF(2)$ 
are the following:

{\scriptsize 
\begin{equation*}
\underbrace{
\left[ \begin{array}{rrrrrrrrrrrrrrr}
1 & 0 & 0 & 0 & 0 & 0 & 0 &         1  & 0  & 1  &  0 & 0  & 0  & 0  & 1  \\
0 & 1 & 0 & 0 & 0 & 0 & 0 &         0  & 0  & 0  &  1 & 1  & 0  & 1  & 0  \\
0 & 0 & 1 & 0 & 0 & 0 & 0 &         1  & 1  & 0  &  0 & 1  & 1  & 0  & 0  \\ 
0 & 0 & 0 & 1 & 0 & 0 & 0 &         1  & 1  & 0  &  0 & 0  & 1  & 1  & 0  \\
0 & 0 & 0 & 0 & 1 & 0 & 0 &         0  & 1  & 1  &  1 & 1  & 0  & 0  & 0  \\ 
0 & 0 & 0 & 0 & 0 & 1 & 0 &         0  & 1  & 1  &  1 & 1  & 1  & 0  & 0  \\
0 & 0 & 0 & 0 & 0 & 0 & 1 &         0  & 1  & 1  &  1 & 1  & 1  & 0  & 1  
\end{array} \right]}_{R_{15}}, 
\underbrace{
\left[ \begin{array}{rrrrrrrrrrrrrrrr}
1 & 0 & 0 & 0 & 0 & 0 & 0 & 0 &     0  & 1  & 1  &  0 & 1  & 0  & 0  & 0  \\
0 & 1 & 0 & 0 & 0 & 0 & 0 & 0 &     0  & 0  & 0  &  0 & 1  & 1  & 1  & 0  \\
0 & 0 & 1 & 0 & 0 & 0 & 0 & 0 &     0  & 1  & 1  &  0 & 1  & 1  & 1  & 0  \\ 
0 & 0 & 0 & 1 & 0 & 0 & 0 & 0 &     0  & 0  & 0  &  1 & 1  & 1  & 0  & 0  \\
0 & 0 & 0 & 0 & 1 & 0 & 0 & 0 &     1  & 1  & 0  &  1 & 1  & 1  & 0  & 0  \\
0 & 0 & 0 & 0 & 0 & 1 & 0 & 0 &     1  & 0  & 1  &  1 & 0  & 0  & 0  & 0  \\
0 & 0 & 0 & 0 & 0 & 0 & 1 & 0 &     1  & 1  & 0  &  0 & 0  & 0  & 0  & 1  \\
0 & 0 & 0 & 0 & 0 & 0 & 0 & 1 &     1  & 0  & 1  &  0 & 0  & 0  & 0  & 1  
\end{array} \right]}_{R_{16}}.
\end{equation*}
}
Specifically, in~\cite{Slilaty:2005b} we find Theorem~\ref{th_ree}  and in~\cite{SliQinZhou:07} we find Theorem~\ref{th_slqz12}.
\begin{theorem} \label{th_ree}
A cographic matroid $M$ is signed-graphic if and only if $M$ has no minor isomorphic to $M^{*}(G_1),\ldots,M^{*}(G_{29})$. 
\end{theorem}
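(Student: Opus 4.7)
The plan is to reduce the theorem to Archdeacon's excluded-minor characterization of projective-planar graphs via the core equivalence that, for a 2-connected graph $G$, the cographic matroid $M^{*}(G)$ is signed-graphic if and only if $G$ embeds in the projective plane $\mathbb{RP}^{2}$. Of the 35 excluded minors for projective planarity, 29 are 2-connected and 6 have a cut-vertex; the six 1-separable ones are 1-sums of 2-connected projective-planar pieces, so their cographic matroids decompose as direct sums of cographic matroids of 2-connected projective-planar graphs. Since each direct summand is signed-graphic and signed-graphic matroids are closed under direct sums (disjoint union of signed graphs), these six graphs give matroids that are already signed-graphic and hence contribute no forbidden minors. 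Thus only the 29 non-separable graphs appear in the list.

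For sufficiency, suppose $M = M^{*}(G)$ has none of the listed minors. Using the duality between graph minors and cographic matroid minors, $G$ has no $G_{i}$-minor, so Archdeacon's theorem yields a 2-cell embedding of $G$ in $\mathbb{RP}^{2}$. From such an embedding I would construct a signed graph $\Sigma$ whose underlying graph is essentially the geometric dual and whose signature records whether dual paths wind around the crosscap; concretely, a dual edge receives sign $-1$ exactly when a small transverse push along it is orientation-reversing. I would then verify $M(\Sigma) = M^{*}(G)$ by matching the three circuit types of Theorem~\ref{th_sgg2} with the bond (cocircuit) family of $G$: bonds bounding an orientable disc correspond to positive cycles in $\Sigma$, while bonds encircling the crosscap correspond to tight or loose handcuff circuits, the two cases arising from whether the two negative cycles share a vertex or are linked by a path.

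For necessity, I would use that signed-graphic matroids are closed under minors (deletion corresponds to edge deletion in $\Sigma$, and contraction corresponds to a sign-switching at one endpoint followed by deletion), so it suffices to show no $M^{*}(G_{i})$ is signed-graphic. The cleanest route is to reverse the construction above: from a hypothetical signed-graph representation $M(\Sigma) = M^{*}(G_{i})$ one extracts a projective 2-cell embedding of $G_{i}$, contradicting its non-projective-planarity. Alternatively one could verify non-signed-graphic status for each of the 29 graphs individually using matroid invariants, but this is a much less uniform argument.

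The main obstacle is the reverse construction in the necessity direction. Given only the abstract matroid $M(\Sigma)$, one has no embedding to start from, so one must argue that the combinatorial circuit structure (positive cycles together with tight and loose handcuffs) of a signed-graphic matroid that also happens to be cographic forces the existence of a projective 2-cell embedding of the dual graph. Establishing this correspondence --- which essentially states that signed-graphic cographic matroids are precisely cographic matroids of projective-planar graphs --- is the technical heart of the theorem and is where most of the topological and combinatorial work would concentrate.
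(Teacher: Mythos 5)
A point of reference first: the paper does not prove this statement at all. Theorem~\ref{th_ree} is quoted from Slilaty's paper \cite{Slilaty:2005b}, so there is no in-paper proof to compare yours against. Judged on its own terms, your outline does track the strategy of Slilaty's actual argument: the bridge between cographic signed-graphic matroids and projective-planar graphs, with the 29 non-separable excluded minors for the projective plane becoming the cographic excluded minors for signed-graphic matroids, and the six separable ones discarded because their bond matroids decompose as direct sums of bond matroids of projective-planar pieces. Your sufficiency construction (geometric dual in $\mathbb{RP}^2$, signs recording orientation reversal, bonds matching the three circuit types of Theorem~\ref{th_sgg2} according to whether the dual curve is contractible or is a pair of non-contractible curves) is the right one.

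There is, however, a genuine gap, and you have named it yourself without closing it: the necessity direction. To conclude that no $M^{*}(G_i)$ is signed-graphic you must show that a $3$-connected cographic matroid which is signed-graphic is necessarily the bond matroid of a projective-planar graph, i.e., you must reconstruct a projective embedding from a purely combinatorial signed-graph representation $\Sigma$ with $M(\Sigma)\cong M^{*}(G_i)$. This is the technical core of Slilaty's paper and occupies most of it; without it your argument establishes only one implication, and the ``alternative'' of checking 29 matroids by invariants is not carried out either. Two smaller inaccuracies: (1) the equivalence ``$M^{*}(G)$ signed-graphic iff $G$ projective-planar'' does not hold at the level of $2$-connected graphs as you assert --- the topological equivalence is established for $3$-connected graphs, and connectivity-$2$ graphs (e.g., $2$-sums of non-planar pieces) require a separate decomposition argument before the excluded-minor list can be read off; (2) minor-closure of signed-graphic matroids, which you invoke, needs Zaslavsky's extended operations (half edges, contraction of negative edges and unbalanced blocks); contraction is not simply ``switch and delete'' for negative loops. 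As it stands the proposal is a correct road map of the known proof, not a proof.
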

\begin{theorem} \label{th_slqz12}
A regular matroid $M$ is signed-graphic if and only if $M$ has no minor isomorphic to $M^{*}(G_1),\ldots,M^{*}(G_{29}), R_{15}$ 
or $R_{16}$. 
\end{theorem}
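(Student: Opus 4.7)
The plan is to prove both directions using the structure theory of regular matroids, with Seymour's decomposition providing the backbone of the sufficiency argument.

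For the necessity (``only if''), I would show that none of the listed matroids are signed-graphic. For the matroids $M^{*}(G_1),\ldots,M^{*}(G_{29})$ this is immediate from Theorem~\ref{th_ree}: they are excluded minors for the class of cographic signed-graphic matroids, so in particular they lie outside the class of signed-graphic matroids, which is closed under minors. For $R_{15}$ and $R_{16}$, I would first verify regularity directly from the given $GF(2)$-representations via the standard signing test for total unimodularity, and then rule out signed-graphicness by exhibiting in each one a cographic minor isomorphic to some $M^{*}(G_i)$, again invoking Theorem~\ref{th_ree} on that minor.

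For the sufficiency (``if''), I would invoke Seymour's decomposition theorem: every regular matroid is obtained from graphic matroids, cographic matroids and copies of $R_{10}$ by $1$-, $2$- and $3$-sums. Then I would argue by induction on $|E(M)|$. The base cases are easy: a graphic matroid is signed-graphic (use the all-positive signing on its underlying graph); for a cographic $M$, failure of the signed-graphic property delivers one of $M^{*}(G_1),\ldots,M^{*}(G_{29})$ as a minor via Theorem~\ref{th_ree}; and $R_{10}$ can be exhibited as signed-graphic by writing down a concrete signed graph realising it, so this summand never contributes to the obstruction list.

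The main obstacle, as expected, is the composition step. Suppose $M=M_1\oplus_k M_2$ for $k\in\{1,2,3\}$ is regular but not signed-graphic. If either summand is already non-signed-graphic, then by induction it contains one of the listed forbidden minors, and since a minor of a summand pulls back to a minor of the sum (after the appropriate deletion/contraction of the gluing elements), we are done. Thus I may assume both $M_1$ and $M_2$ are signed-graphic, and the delicate question is: which $3$-sums of signed-graphic matroids fail to be signed-graphic? This requires a careful case analysis of how the three-element gluing cocircuit embeds into each summand as a union of positive/negative cycles in the sense of Theorem~\ref{th_sgg2}. I expect the conclusion to be that the only obstructions to signed-graphicness arising from $3$-sums are exactly the configurations which, after suitable reduction to a $3$-connected substructure, are isomorphic to $R_{15}$ or $R_{16}$; matching the resulting binary representations against the two matrices displayed above identifies the minor explicitly. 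This enumeration is the technical heart of the proof and is where essentially all the work resides.
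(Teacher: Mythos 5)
First, a point of reference: the paper does not prove this statement at all. Theorem~\ref{th_slqz12} is imported verbatim from Qin, Slilaty and Zhou~\cite{SliQinZhou:07}, so there is no internal proof to compare your proposal against; it can only be judged on its own merits. On those merits it is a reasonable programme but it is not a proof, for two concrete reasons.

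For the ``only if'' direction, your plan for $R_{15}$ and $R_{16}$ cannot work. These two matroids are \emph{excluded minors} for the class of signed-graphic matroids (that is what makes the list in the theorem minimal), so every proper minor of each of them \emph{is} signed-graphic; in particular neither contains any $M^{*}(G_i)$ as a minor, and neither is itself cographic (otherwise it would already appear among the 29 cographic excluded minors). Showing that $R_{15}$ and $R_{16}$ are not signed-graphic therefore requires a direct argument --- e.g.\ a case analysis showing that no signed graph realises the given $GF(2)$-representations --- and cannot be reduced to Theorem~\ref{th_ree}. For the ``if'' direction, the Seymour-decomposition skeleton is plausible (and the base cases you list are fine: graphic matroids take the all-positive signing, and $R_{10}$ is the signed-graphic matroid of the all-negative $K_5$), but the composition step is precisely where the entire content of the theorem lives, and you do not carry it out: ``I expect the conclusion to be that the only obstructions \ldots are exactly $R_{15}$ or $R_{16}$'' is a statement of the theorem, not an argument for it. Note also that the same delicacy arises already at $2$-sums, not only $3$-sums, since signed-graphic matroids are not obviously closed under $2$-sums; your induction silently assumes they are. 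As it stands the proposal establishes neither direction.
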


The following two lemmas are essential for the proof of the main result of this section which characterizes the regular 
matroids with graphic cocircuits.

\begin{lemma} \label{lem_ll1}
If a matroid  $M$ is isomorphic to $M^{*}(G_{17})$ or  $M^{*}(G_{19})$ then for any cocircuit 
$Y\in{\mathcal{C}^{*}(M)}$, the matroid $M\backslash{Y}$ is graphic. 
\end{lemma}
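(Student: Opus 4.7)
The plan is to dualize the statement. Since $Y$ is a cocircuit of $M=M^*(G_i)$, it is a circuit of $M^*=M(G_i)$, hence the edge set of a circle $C$ of $G_i$. Using the duality identities~(\ref{eq_lambra}),
\begin{equation*}
M^*(G_i)\backslash Y=\bigl(M(G_i)/Y\bigr)^{*},
\end{equation*}
and the fact that $M(G_i)/Y=M(G_i/E(C))$ (edge contraction in the graph realises contraction in the cycle matroid, once loops produced by the contraction are discarded), the lemma becomes the assertion that $M(G_i/E(C))$ is cographic for every circle $C$ of $G_i$. A graphic matroid $M(H)$ is cographic exactly when $H$ is planar, so it suffices to prove that $G_i/E(C)$ is planar for every circle $C$ of $G_i$, $i\in\{17,19\}$.

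Next I would fix explicit drawings of $G_{17}$ and $G_{19}$ from Archdeacon's list in \cite{Arch:81,MoTh:01} and exploit their automorphism groups to reduce the enumeration of circles to one representative per automorphism orbit. For each representative $C$ the task is to verify planarity of $G_i/E(C)$, either by producing a plane embedding or by ruling out $K_5$ and $K_{3,3}$ minors via Theorem~\ref{th_tttt}. A useful heuristic is that, because $G_i$ is a minimal forbidden minor for projective planarity, already a single edge contraction leaves a projective-planar graph, and contracting a whole circle collapses enough of the structure to force planarity in every case.

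The main obstacle is the bookkeeping of this finite case analysis: both $G_{17}$ and $G_{19}$ contain a substantial number of circles, and the short ones (which collapse a small part of the graph) tend to give the richest quotients, where a $K_5$ or $K_{3,3}$ minor is a priori most likely to survive. Carefully organising the circles by length and by the set of vertices they meet, and combining this with the symmetry reduction, is therefore the crux of the argument, reducing the verification to a handful of explicit quotients whose planarity can be checked by inspection.
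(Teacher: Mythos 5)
Your reduction is the same as the paper's: dualize via~(\ref{eq_lambra}) so that the claim becomes ``$M(G_i)/Y$ is cographic for every circuit $Y$ of $M(G_i)$,'' and then work at the level of the graphs $G_{17}$ and $G_{19}$. That part is sound. But the actual content of the lemma --- verifying that every such contraction is cographic --- is left as a plan (``fix explicit drawings\ldots reduce the enumeration of circles to one representative per automorphism orbit\ldots a handful of explicit quotients whose planarity can be checked by inspection'') rather than carried out. The ``useful heuristic'' that contracting a whole circle of a minimal forbidden minor for projective planarity ``collapses enough of the structure to force planarity in every case'' is exactly the statement to be proved, not an argument for it; as written, nothing in the proposal rules out a $K_5$ or $K_{3,3}$ minor surviving in some quotient.

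The gap can be closed without any enumeration, and this is what the paper does. The key observation you are missing is that $G_{17}\cong K_{3,5}$ and $G_{19}\cong K_{4,4}\backslash e$ (so one may work inside $K_{4,4}$), and these graphs are $3$-connected, simple and \emph{bipartite}. Hence every circuit $Y$ is a circle of even length at least $4$, so contracting it removes at least three vertices, leaving a graph on at most five vertices; such a graph has rank at most $4<5=r(M(K_{3,3}))$, killing any $K_{3,3}$ minor at once. For $K_5$ one checks that the quotients by $4$-circles are (up to isomorphism) two specific small graphs whose simplifications have at most $7$ or $8$ elements, fewer than the $10$ elements of the simple matroid $M(K_5)$; longer circles are handled by the same rank count. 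This counting argument replaces your case analysis entirely, and without something of this kind (or an actually executed enumeration) the proposal does not yet constitute a proof.
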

\begin{proof}
By (\ref{eq_lambra}), we can equivalently show that, for any circuit $Y$ of $M^{*}$, the matroid $M^{*}/Y$ is cographic. The matroid $M^{*}$ is graphic and thus, regular. Therefore, by Theorem~\ref{th_tttt}, we have to show that for any circuit $Y$ of $M^{*}\in\{M(G_{17}),M(G_{19}))\}$ the matroid $M^{*}/Y$ has no minor isomorphic to $M(K_5)$ or $M(K_{3,3})$. Observe that $G_{17}$ is isomorphic to the graph $K_{3,5}$ and $G_{19}$ is isomorphic to $K_{4,4}\backslash{e}$, where $e$ is any edge of $K_{4,4}$. Since $M(G_{19})$ is a graphic matroid  we have that $M(G_{19})\cong{M(K_{4,4}\backslash{e})}=M(K_{4,4})\backslash{e}$. Therefore,  $M(K_{4,4})$ has a minor isomorphic to $M(G_{19})$ and by~\eqref{eq_nn1}, any circuit of $M(G_{19})$ is a circuit of $M(K_{4,4})$. Thus, it suffices to prove that for any circuit $Y_1\in{\mathcal{C}(M(K_{3,5}))}$ and $Y_2\in{\mathcal{C}(M(K_{4,4}))}$ the matroids  $M(K_{3,5})/Y_1=M(K_{3,5}/Y_1)$ and  $M(K_{4,4})/Y_2=M(K_{4,4}/Y_2)$ have no minor isomorphic to $M(K_5)$ or $M(K_{3,3})$ in order to prove the theorem.

Since $K_{3,5}$ and  $K_{4,4}$ are $3$-connected,  we get that $Y_1$ and $Y_2$ correspond to circles of $K_{3,5}$ and  $K_{4,4}$, respectively.  The $3$-connected graphs $K_{3,5}$ and $K_{4,4}$ are also bipartite and therefore, they have  no circle of odd cardinality and moreover, they have no parallel edges. This means that $K_{3,5}/Y_{1}$ and $K_{4,4}/Y_{2}$ have at most five vertices each. Therefore, the matroids  $M(K_{3,5}/{Y_1})$ and  $M(K_{4,4}/Y_2)$ have rank at most $4$ which is less than the rank of $M(K_{3,3})$. Evidently, $M(K_{3,5}/{Y_1})$ and  $M(K_{4,4}/Y_2)$ can not have a minor isomorphic to $M(K_{3,3})$. 

It remains to be shown that $M(K_{3,5}/{Y_1})$ and  $M(K_{4,4}/Y_2)$ have no minor isomorphic to $M(K_5)$. Let us suppose that $Y_1$ and $Y_2$ are circuits of cardinality four. Then, since $K_{3,5}$ and $K_{4,4}$ are $3$-connected we have  that (by Lemma 5.3.2 in~\cite{Oxley:06}) $Y_1$ and $Y_2$ are circles of $K_{3,5}$ and $K_{4,4}$, respectively, with cardinality four. Observe now that for any $Y_1$ and $Y_2$, the graphs $K_{3,5}/Y_1$  and $K_{4,4}/Y_2$ are isomorphic to the graphs $\bar{G}$ and $\hat{G}$ of~Figure~\ref{fig:g2}, respectively. Furthermore, parallel edges of a graph correspond to parallel elements in the associated graphic matroid. Therefore, any simple minor of $M(\bar{G})$ or $M(\hat{G})$ has at most seven or eight elements, respectively. The matroid $M(K_5)$ is simple and has ten elements. Therefore, $M(K_5)$ can not be a minor of $M(\bar{G})\cong{M(K_{3,5}/{Y_1})}$ or $M(\hat{G})\cong{M(K_{4,4}/Y_2)}$.
For the remaining case, that is, if $Y_1$ or $Y_2$ has more than four elements, the proof is quite similar to the one we followed in order to prove that $M(K_{3,5}/{Y_1})$ and  $M(K_{4,4}/Y_2)$ have no minor isomorphic to $M(K_{3,3})$ and for that reason is ommited.
\begin{figure}[h] 
\begin{center}
\centering
\psfrag{G1}{\footnotesize $\bar{G}$}
\psfrag{G2}{\footnotesize $\hat{G}$}
\includegraphics*[scale=0.3]{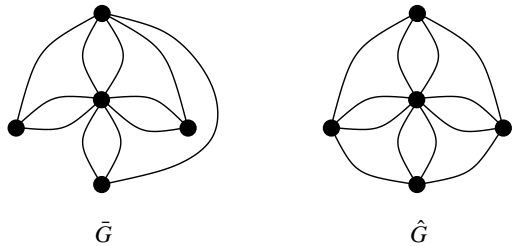}
\end{center}
\caption{The graphs $\bar{G}$ and  $\hat{G}$.}
\label{fig:g2}
\end{figure} 
\end{proof}

\begin{lemma}\label{lem_1}
If $N$ is a minor of a matroid $M$ then for any cocircuit $C_{N}$ of $N$ there
exists a cocircuit $C_{M}$ of $M$ such that $N\backslash C_{N}$ is a minor of 
$M\backslash C_{M}$. 
\end{lemma}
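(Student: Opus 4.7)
The plan is to translate the claim into one about the dual matroid using \eqref{eq_lambra}, and then read off the answer from the standard descriptions \eqref{eq_nn1} and \eqref{eq_nn2} of how circuits behave under deletion and contraction.

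First, I would write the minor in the canonical form $N = M/I\setminus J$ for disjoint $I, J \subseteq E(M)$. Dualising via \eqref{eq_lambra} yields $N^{*} = (M^{*}\setminus I)/J$, so the cocircuits of $N$ are the circuits of $(M^{*}\setminus I)/J$. Applying \eqref{eq_nn2} and then \eqref{eq_nn1}, these circuits are precisely the minimal non-empty sets of the form $D\setminus J$, where $D$ ranges over the circuits of $M^{*}$ disjoint from $I$ --- equivalently, over the cocircuits of $M$ disjoint from $I$. In particular, for the given cocircuit $C_N$ of $N$ there exists a cocircuit $C_M$ of $M$ with $C_M\cap I = \emptyset$ and $C_N = C_M\setminus J$.

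Second, I would exhibit the required sequence of deletions and contractions from $M\setminus C_M$ down to $N\setminus C_N$. Setting $J' := J\setminus C_M$, the three sets $I$, $C_M$, $J'$ are pairwise disjoint, so deletions and contractions on them commute freely, and
\[
(M\setminus C_M)/I\setminus J' \;=\; M/I\setminus (C_M\cup J) \;=\; (M/I\setminus J)\setminus C_N \;=\; N\setminus C_N,
\]
where the middle equality uses $C_M\cup J = C_N\cup J$, which in turn holds because $C_N = C_M\setminus J$. This exhibits $N\setminus C_N$ as a minor of $M\setminus C_M$.

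The only delicate point is the step from the minimality clause in \eqref{eq_nn2} to the existence of a cocircuit $C_M$ of $M$ with $C_N$ \emph{exactly} equal to $C_M\setminus J$ (rather than merely contained in it); but this is immediate from the definition of the family whose minimal non-empty members form the circuits of $N^{*}$, since each such minimal set is itself a witness of the form $D\setminus J$. Beyond that, the argument is pure bookkeeping with disjoint deletions and contractions, so no genuine obstacle is anticipated.
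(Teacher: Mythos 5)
Your proposal is correct and follows essentially the same route as the paper: dualize the minor relation, use the circuit descriptions of deletion and contraction to produce a cocircuit $C_M$ of $M$ whose excess over $C_N$ lies entirely in the deleted set, and then commute the (pairwise disjoint) deletions and contractions to exhibit $N\backslash C_N$ as a minor of $M\backslash C_M$. Your explicit computation $(M\backslash C_M)/I\backslash J' = N\backslash C_N$ is in fact slightly more careful than the paper's corresponding step, but the underlying argument is the same.
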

\begin{proof}
We have that  $N=M\backslash X / Y$ for some disjoint $X,Y\subseteq{E(M)}$ and by duality, $N^{*}=M^{*} / X \backslash Y$. Therefore, by the definitions of deletion and contraction given in \eqref{eq_nn1} and \eqref{eq_nn2}, we have
that for any cocircuit $C_{N}\in\mathcal{C}(N^{*})$ of $N$ there exists a cocircuit $C_{M}\in\mathcal{C}(M^{*})$ of $M$ such
that 
\begin{itemize}
\item[(i)] $C_{N} \subseteq C_{M}$,
\item[(ii)] $E(N) \cap C_{M} = C_{N}$,
\end{itemize}
which in turn imply that  $C_{M} - C_{N} \subseteq X$. Therefore, $M\backslash{X}$ is a minor of $M\backslash{\{C_{M} - C_{N}\}}$ and since  $N$ is a minor of $M\backslash{X}$ we obtain that $N$ is a minor of  $M\backslash{\{C_{M} - C_{N}\}}$. 
By
\[
M\backslash C_{M}=  M \backslash \{C_{M} - C_{N}\} \backslash C_{N} 
\]
and the fact that $N$ is a minor of  $M\backslash{\{C_{M} - C_{N}\}}$ we have that $N \backslash C_{N}$ is a minor of $M\backslash C_{M}$. 
\end{proof}
We are now ready to prove the main result of this section.

\begin{theorem}\label{th_ll3}
Let $M$ be a regular matroid such that all its cocircuits are graphic. Then, $M$ is signed-graphic if and only if $M$ has no minor isomorphic to $M^{*}(G_{17})$ or $M^{*}(G_{19})$.
\end{theorem}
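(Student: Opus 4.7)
The plan is to exploit Theorem~\ref{th_slqz12} as a black box and then use Lemmas~\ref{lem_ll1} and~\ref{lem_1} to narrow down which forbidden minors from the full list can possibly survive under the ``all cocircuits graphic'' hypothesis. The forward direction is essentially free: if $M$ is signed-graphic, then by Theorem~\ref{th_slqz12} it has no minor in $\{M^{*}(G_1),\ldots,M^{*}(G_{29}),R_{15},R_{16}\}$, and in particular no minor isomorphic to $M^{*}(G_{17})$ or $M^{*}(G_{19})$.

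For the converse, assume $M$ is regular, all cocircuits of $M$ are graphic, and $M$ has no $M^{*}(G_{17})$ or $M^{*}(G_{19})$ minor. By Theorem~\ref{th_slqz12}, it suffices to rule out every other forbidden minor, namely $M^{*}(G_i)$ for $i\in\{1,\ldots,29\}\setminus\{17,19\}$ together with $R_{15}$ and $R_{16}$. The key observation is that the property ``every cocircuit is graphic'' is inherited by minors: if $N$ is a minor of $M$ and $C_N$ is any cocircuit of $N$, then by Lemma~\ref{lem_1} there is a cocircuit $C_M$ of $M$ such that $N\backslash C_N$ is a minor of $M\backslash C_M$; since $M\backslash C_M$ is graphic by hypothesis and the class of graphic matroids is minor-closed, $N\backslash C_N$ is also graphic.

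Consequently, if $M$ contained any of the remaining forbidden minors $N$, then $N$ itself would have to possess only graphic cocircuits. So the statement reduces to the following claim, which I would prove case-by-case: for every $N\in\{M^{*}(G_i):i\in\{1,\ldots,29\}\setminus\{17,19\}\}\cup\{R_{15},R_{16}\}$, there exists at least one cocircuit $C_N$ of $N$ such that $N\backslash C_N$ is not graphic. Establishing this exhibits a contradiction with the inherited graphic-cocircuit property of $M$, completing the proof. Lemma~\ref{lem_ll1} explains exactly why $G_{17}$ and $G_{19}$ had to be singled out in the statement: they are precisely the members of the list for which no such non-graphic cocircuit exists.

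The main obstacle is this case analysis: verifying for each of the $27$ remaining graphs $G_i$ and for $R_{15},R_{16}$ that one can point to an explicit cocircuit whose deletion produces a non-graphic matroid (equivalently, a matroid containing $M^{*}(K_5)$ or $M^{*}(K_{3,3})$ as a minor, by Theorem~\ref{th_tttt}). This is largely a finite verification, carried out either by direct inspection of the graphs $G_i$ (for $M^{*}(G_i)$, cocircuits correspond to minimal edge cuts of $G_i$, and one must identify a bond whose removal leaves a graph whose cographic matroid is non-graphic) or by computation on the explicit $GF(2)$ representation matrices of $R_{15}$ and $R_{16}$. Once this verification is in place, the theorem follows by combining it with Theorem~\ref{th_slqz12} and the minor-inheritance argument above.
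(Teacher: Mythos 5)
Your proposal is correct and takes essentially the same route as the paper: the forward direction from the regular excluded-minor theorem, and the converse by using Lemma~\ref{lem_1} to show that the graphic-cocircuit property is inherited by minors and then verifying, for each excluded minor other than $M^{*}(G_{17})$ and $M^{*}(G_{19})$, the existence of a cocircuit whose deletion is non-graphic. The paper defers exactly that same finite verification to a case analysis checked with the MACEK software, so no further comparison is needed.
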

\begin{proof}
The ``only if'' part is clear because of Theorem~\ref{th_ree}.
For the ``if'' part, by way of contradiction, assume that $M$ is not signed-graphic. By Theorem~\ref{th_ree}, $M$ must contain
a minor $N$ which is isomorphic to some matroid in the set
\[
\mathcal{M}=\{M^{*}(G_1),\ldots,M^{*}(G_{16}),M^{*}(G_{18}),
M^{*}(G_{20}),\ldots,M^{*}(G_{29}), R_{15}^{*}, R_{16}^{*}\}.
\]
By case analysis, verified also by the MACEK software~\cite{Hlileny:07},
it can be shown that for each  matroid  $M'\in \mathcal{M}$ there exists
a cocircuit $Y'\in \mathcal{C}(M'^{*})$ such that the matroid $M'\backslash{Y'}$ 
does contain an $M^{*}(K_{3,3})$ or an $M^{*}(K_{5})$ as a minor. Therefore, by Theorem~\ref{th_tttt},  there exists 
a cocircuit $Y_{N}\in\mathcal{C}(N^{*})$ such  that $N\backslash Y_{N}$ is not graphic. 
Therefore, by Lemma~\ref{lem_1}, there is a cocircuit $Y_{M}\in\mathcal{C}(M^{*})$ such that
$N\backslash Y_{N}$ is a minor of $M\backslash Y_{M}$.  Thus, $M\backslash Y_{M}$
is not graphic which is in contradiction with our assumption that $M$ has graphic cocircuits.
\end{proof}

\section{A recognition algorithm}\label{sec_recal}

Based on Theorem~\ref{th_ll3}, we shall provide an algorithm which given a cographic matroid $M$ with graphic cocircuits determines whether $M$ is signed-graphic or not. In order to do this, we initially consider the following problem ($P_0$):\\
$\mathbf{(P_0)}$: Find the members of the class $\mathcal{G}$ of $3$-connected graphs defined as follows: $G\in{\mathcal{G}}$ if the cographic matroid $M^{*}(G)$ satisfies the following two conditions: (i) $M^{*}(G)$ has a minor isomorphic to $M^{*}(G_{17})$ or $M^{*}(G_{19})$, and (ii)  for any $X\in{\mathcal{C}^{*}(M(G))}$, the matroid $M^{*}(G)\backslash{X}$ is graphic.\\
 
By duality, we obtain the following equivalent problem $(P_1)$: \\
$\mathbf{(P_1)}$: Find the members of the class $\mathcal{G}$ of $3$-connected graphs defined as follows: $G\in{\mathcal{G}}$ if the graphic matroid $M(G)$ satisfies the following two conditions: (i) $M(G)$ has a minor isomorphic to $M(G_{17})$ or $M(G_{19})$, and (ii)  for any $X\in{\mathcal{C}(M(G))}$, the matroid $M(G)/{X}$ is cographic. 
\\

Let $M'$ be a matroid isomorphic to a minor of the graphic matroid $M(G)$. Since graphic matroids are closed under minors (see Corollary~3.2.2 in~\cite{Oxley:06}), we have that there exists a graph $G'$ such that $M(G')\cong{M'}$. This implies that there exist disjoint subsets $S$ and $T$ of $E(M(G))$ such that $M(G)\backslash{S}/T\cong{M(G')}$. By well-known results regarding the minors of graphic matroids (see results 3.1.2 and 3.2.1 in~\cite{Oxley:06}), $M(G)\backslash{S}/T=M(G\backslash{S}/T)\cong{{M(G')}}$. 
By Lemma 5.3.2 in~\cite{Oxley:06}, if  $G'$ is $3$-connected and $M(G\backslash{S}/T)\cong{{M(G')}}$ then ${G'}\cong{\hat{G}}$, where $\hat{G}$ is the graph obtained from $G\backslash{S}/T$ by deleting any isolated vertices. Thus, since both $G_{17}$ and $G_{19}$ are $3$-connected, condition (i) of $(P_1)$ is equivalent to: $G$ has a $G_{17}-$ or a $G_{19}-$minor. By the dual version of Theorem~\ref{th_tttt} and due to the fact that $K_5$ and $K_{3,3}$ are $3$-connected graphs, we have that condition (ii) of $(P_1)$ is equivalent to: for any circle $X$ of $G$, the graph $G/X$ has no $K_5-$ or $K_{3,3}-$minor. Furthermore, since there is one-to-one correspondence between the circles of a graph and the circuits of the associated graphic matroid, we easily obtain the following problem $(P_2)$ which is equivalent to $(P_1)$ and, therefore, equivalent to $(P_0)$:\\
\\
$\mathbf{(P_2)}$: 
Find the members of the class $\mathcal{G}$ of $3$-connected graphs defined as follows: $G\in{\mathcal{G}}$ if $G$ satisfies the following two conditions: (i) $G$ has a $G_{17}-$ or a $G_{19}-$minor, and (ii) for any circle $X$ of $G$, the graph $G/X$ has no $K_5-$ or $K_{3,3}-$minor. 
\\

In the following Theorem~\ref{th_g171} we identify the members of $\mathcal{G}$ which have a $G_{17}-$minor and probably not a $G_{19}-$minor. Of great importance for the proof of this theorem is Theorem~\ref{th_degami} of Negami in~\cite{Negami:82}, which is a complement theorem to the well known Wheel Theorem  of Tutte (see~\cite{Tutte:61,Tutte:01}).
\begin{theorem} \label{th_degami}
Let $H$ be a graph not isomorphic to a wheel. Then a graph $G$ is $3$-connected and has $H$ as a minor if and only if $G$ can be obtained from $H$ by a sequence of the following two operations:
\begin{itemize}
\item[O1:] addition of an edge between two non-adjacent vertices, and
\item[O2:] the replacement of a vertex $v$ of degree at least $4$ by two adjacent vertices $v_1$ and $v_2$  such that each vertex formerly adjacent to $v$ becomes adjacent to exactly one of $v_1$ or $v_2$ so that in the resulting graph the degree of each  of $v_1$ and $v_2$ is greater than $2$.
\end{itemize}
\end{theorem}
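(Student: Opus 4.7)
The plan is to prove the biconditional by induction on $|E(G)|-|E(H)|$, with the ``if'' direction being routine and the ``only if'' direction carrying the substantive content by leveraging Tutte's Wheel Theorem. For the ``if'' direction, I would verify that each of O1 and O2 preserves both 3-connectivity and the property of having $H$ as a minor. Minor preservation is essentially automatic: the inverse of O1 is the deletion of the newly added edge, and the inverse of O2 is the contraction of the newly added edge $\{v_1,v_2\}$, so in each case the pre-operation graph remains a minor of the post-operation graph, and $H$-inheritance follows by transitivity. Preservation of 3-connectivity under O1 is clear because adding edges never reduces connectivity; under O2 it requires a short case check that any hypothetical $1$- or $2$-separation of the split graph would lift to a separation in $G$ unless one of $v_1,v_2$ had degree at most two, contradicting the stated degree hypothesis.

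For the ``only if'' direction, assume $G$ is 3-connected with $H$ as a proper minor. The goal is to produce an edge $e$ of $G$ together with a choice of deletion or contraction such that the resulting graph $G'$ is still 3-connected and still has $H$ as a minor. The induction hypothesis then writes $G'$ as a chain of O1/O2 operations starting from $H$, and one additional reverse operation recovers $G$. To locate $e$, I would fix a minor model of $H$ in $G$, namely disjoint sets $S,T\subseteq E(G)$ with $G\backslash S/T \cong H$; every edge of $S$ may be deleted and every edge of $T$ contracted while preserving the $H$-minor, so the task reduces to finding one such edge whose corresponding reduction also preserves 3-connectivity.

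The main obstacle is marrying the freedom supplied by Tutte's Wheel Theorem, which produces some edge whose deletion or contraction preserves 3-connectivity, with the freedom supplied by the $H$-model, which restricts which edges are safe to delete or contract. My strategy is to show that these two freedoms intersect: either a Tutte-admissible edge already lies in $S\cup T$, or the model $(S,T)$ can be locally adjusted to accommodate one. The hypothesis that $H$ is not a wheel is precisely what rules out the Splitter-type degenerate configuration in which every Tutte-admissible reduction destroys the $H$-minor. The case where $G$ itself is a wheel must be treated separately, but since the minors of a wheel form a very short list consisting essentially of smaller wheels and a few small exceptions, this reduces to direct inspection.
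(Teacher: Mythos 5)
The paper does not prove this statement at all: it is quoted verbatim as a known result of Negami \cite{Negami:82} (a companion to Tutte's Wheel Theorem and a graph-theoretic precursor of Seymour's Splitter Theorem), so there is no in-paper proof to compare yours against. Judged on its own, your proposal correctly identifies the overall architecture — induction on $|E(G)|-|E(H)|$, with the ``if'' direction reduced to checking that O1 and O2 preserve Tutte $3$-connectivity and $H$-minor containment, and the ``only if'' direction reduced to finding a single edge whose deletion or contraction preserves both properties. The ``if'' half is essentially fine as sketched (modulo the routine verification that edge addition and admissible vertex splitting preserve Tutte $3$-connectivity, and the implicit standing assumption that $H$ is itself $3$-connected, which the paper's statement omits).

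The genuine gap is in the ``only if'' direction, at exactly the point you flag as ``the main obstacle.'' The claim that the Tutte-admissible edges and the edges of the minor model $S\cup T$ can always be made to intersect — ``either a Tutte-admissible edge already lies in $S\cup T$, or the model $(S,T)$ can be locally adjusted to accommodate one'' — is not an argument; it is a restatement of the theorem. Tutte's Wheel Theorem only guarantees an edge whose removal or contraction keeps the graph $3$-connected, with no control over its position relative to the $H$-model, and the entire content of Negami's result is the case analysis (on the structure of $3$-separations and fan/wheel-like configurations near the model) showing that when $H$ is not a wheel one can always steer such an edge into $S\cup T$ or re-route the model. Without that analysis the induction does not close. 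Your treatment of the ``$G$ is a wheel'' case is also slightly off target: since every $3$-connected minor of a wheel is a wheel, that case is vacuous under the hypothesis that $H$ is not a wheel, rather than a matter of inspecting ``a few small exceptions.''
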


Moreover, in Theorem~\ref{th_g171} we denote by $K_{3,n}$  the complete bipartite graph with $3$ and $n$ vertices at each side of the bipartition and by $K^{+1}_{3,n}$, $K^{+2}_{3,n}$ and $K^{+3}_{3,n}$ are denoted the graphs which are isomorphic to the graphs depicted in Figure~\ref{fig:k3n}, where $n\geq{5}$. Clearly, $K^{+1}_{3,n}$, $K^{+2}_{3,n}$ or $K^{+3}_{3,n}$ can be obtained from $K_{3,n}$ by a specific addition of one, two or three edges, respectively.

\begin{figure}[h] 
\begin{center}
\centering
\psfrag{(i):K1}{\footnotesize $K^{+1}_{3,n}$}
\psfrag{(ii):K2}{\footnotesize $K^{+2}_{3,n}$}
\psfrag{(iii):K3}{\footnotesize $K^{+3}_{3,n}$}
\includegraphics*[scale=0.4]{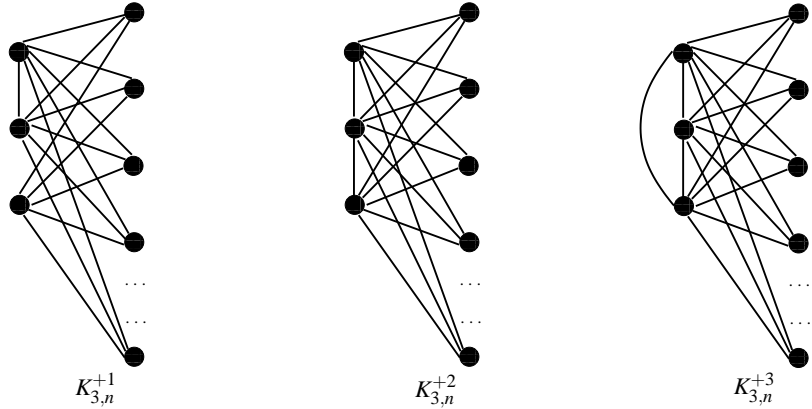}
\end{center}
\caption{The graphs $K^{+1}_{3,n}$, $K^{+2}_{3,n}$ and $K^{+3}_{3,n}$.}
\label{fig:k3n}
\end{figure}

\begin{theorem} \label{th_g171}
A graph $G$ is isomorphic to one of the graphs in  $\mathcal{L}=\{K_{3,n}$, $K^{+1}_{3,n}$, $K^{+2}_{3,n}$, $K^{+3}_{3,n}\}$, where $n\geq{5}$, if and only if $G$ satisfies the following conditions:
\begin{itemize}
\item [(i)] it is $3$-connected,
\item [(ii)]it has a $G_{17}-$minor, and
\item [(iii)] for any circle $X$ of $G$,  the graph $G/X$ has neither a $K_5$ nor a $K_{3,3}$ as a minor.
\end{itemize}
\end{theorem}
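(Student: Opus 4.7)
The plan is to invoke Negami's theorem (Theorem~\ref{th_degami}) as the main structural tool: since $G_{17} \cong K_{3,5}$ is not isomorphic to a wheel, every $3$-connected graph containing a $K_{3,5}$-minor is obtained from $K_{3,5}$ by a (possibly empty) sequence of the two operations O1 (edge addition between non-adjacent vertices) and O2 (vertex split at a vertex of degree at least four). The task is then to determine which of these constructions preserve condition (iii) and to verify that the resulting graphs are precisely the members of $\mathcal{L}$.

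For the ``if'' direction, given $G \in \mathcal{L}$ I would verify the three conditions directly. Conditions (i) and (ii) are immediate: $K_{3,n}$ is $3$-connected for $n \geq 3$ and contains $K_{3,5}$ as a subgraph whenever $n \geq 5$, and the specific edge additions defining $K_{3,n}^{+1}$, $K_{3,n}^{+2}$, $K_{3,n}^{+3}$ preserve both properties. For condition (iii), I would classify the circles of each $G \in \mathcal{L}$ by length and by the edges they use, distinguishing the bipartite cycles inherited from the $K_{3,n}$ skeleton from the short cycles introduced by the added edges on the small side. A direct case check then shows that in every case the quotient $G/X$ collapses onto few vertices with a ``book-like'' structure that is too sparse (and has too few vertices on one side) to admit $K_5$ or $K_{3,3}$ as a minor.

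For the ``only if'' direction I would apply Negami's theorem and proceed by induction on the length of the sequence that builds $G$ from $K_{3,5}$. At each step I would show that either the new graph belongs to $\mathcal{L}$ (possibly with $n$ enlarged) or condition (iii) fails. For O1 the key observation is that an edge added between two vertices of the large ($n$-)side creates a $3$-cycle with each of the three $3$-side vertices; contracting such a triangle collapses three vertices while leaving enough of the $K_{3,n}$ skeleton (when $n \geq 5$) to display a $K_{3,3}$ or $K_5$ minor. Hence the only admissible O1 steps add edges among the three pairs on the $3$-side, and so we can reach at most $K_{3,n}^{+3}$. For O2 one argues that a split of a $3$-side vertex either corresponds to enlarging $n$ by one and yields another member of $\mathcal{L}$, or else creates a short circle (typically of length $4$ through a common $n$-side neighbour, or length $5$ through the new edge between the split vertices) whose contraction produces a forbidden minor.

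The principal difficulty lies in the O2 case analysis, because vertex splittings admit several parameterisations (depending on how the neighbours of the split vertex are distributed) and each non-trivial split must be shown to create an explicit short circle whose contraction contains $K_{3,3}$ or $K_5$ as a minor. Carrying this out requires a careful structural inspection that exploits the bipartite skeleton of $K_{3,n}$ together with the redistributed neighbourhoods of the new vertices, and probably leans on the particular structure of the graphs in Figure~\ref{fig:k3n} to identify, for each offending split, a short circle whose quotient realises the desired forbidden minor.
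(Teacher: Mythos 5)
Your proposal follows essentially the same route as the paper: invoke Negami's theorem from $G_{17}\cong K_{3,5}$, determine which applications of O1 and O2 preserve condition (iii), and verify directly that the members of $\mathcal{L}$ satisfy (i)--(iii). The paper's proof is exactly this, with the work you defer carried out in two claims: its Claim~1 shows that an O1 edge with an end-vertex on the $n$-side creates a circle whose contraction yields a $K_{3,3}$-minor (your triangle-contraction observation for an edge between two $n$-side vertices is a correct instance of this), and its Claim~2 handles O2 by exhibiting, for any split in which both new vertices $v_1,v_2$ see both sides of the bipartition, an explicit subgraph and circle whose contraction gives $K_{3,3}$; the only surviving splits turn $K_{3,n}^{+2}$ or $K_{3,n}^{+3}$ into $K_{3,n+1}$ or $K_{3,n+1}^{+1}$, which is how $n$ grows. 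One detail in your O2 sketch needs repair: under O2 the vertices $v_1$ and $v_2$ cannot have a common neighbour among the former neighbours of $v$, so the offending short circle is not a $4$-cycle through a shared $n$-side neighbour but a $4$-cycle through one of the other two $3$-side vertices (or a longer circle using the new edge $v_1v_2$). Also, for the ``if'' direction the paper avoids your circle-by-circle case check with a cleaner counting argument: every graph in $\mathcal{L}$ has only three vertices of degree exceeding $3$ (so no $K_5$-minor survives any contraction), and every circle passes through at least two of them, so any quotient $G/X$ has at most two vertices of degree exceeding $3$ and hence no $K_{3,3}$-minor. Finally, both you and the paper rely implicitly on the fact that once condition (iii) fails at an intermediate graph of the Negami sequence it fails for every later graph; this is the minor-monotonicity established by Lemma~\ref{lem_1} and is worth stating explicitly in your induction.
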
 
\begin{proof}
Let $\mathcal{F}$ be the class of graphs consisting of all the graphs satisfying conditions (i), (ii) and (iii) of the theorem. We shall show that $\mathcal{F}=\mathcal{L}$. Note that for any graph in $\mathcal{L}$ we denote by $B$ the subset of its vertices having degree $3$ and by $A$ the set of the remaining vertices. Furthermore, if an $O_2$ operation is applied in a graph $H$ in $\mathcal{L}$ such that a vertex $v$ is replaced by two vertices $v_1$ and $v_2$ then in the graph $G$ so obtained we say that $A$ is the set of vertices of $G$ obtained from the set $A$ associated with $H$ by deleting $v$ and adding $v_1$ and $v_2$.  We initially prove two claims.\\
{\bf Claim 1:} Let $G_e$ be a graph obtained from a graph $G\in{\mathcal{L}}$ by applying operation $O1$ and also let  $e$ be the edge added by this operation. Then, if at least one end-vertex of $e$ is in $B$ then $G_e\notin{\mathcal{F}}$.\\
\emph{Proof of Claim 1:} 
If one of the end-vertices of $e$ is in $A$ and the other is in  $B$ then $e$ will have the same end-vertices with an existing edge of $G$ and thus, the graph $G_e$ so-obtained will not be $3$-connected. For the remaining case, up to isomorphism, $G_e$ will have as a subgraph  the graph depicted in Figure~\ref{fig:k35sub42}. Contracting the circle consisting of the dashed edges we obtain a graph having a $K_{3,3}-$minor. Thus, in both cases we have that $G_e\notin{\mathcal{F}}$. $\blacksquare$

\begin{figure}[h] 
\begin{center}
\centering
\psfrag{A}{\footnotesize $A$}
\psfrag{B}{\footnotesize $B$}
\psfrag{e}{\footnotesize $e$}
\includegraphics*[scale=0.4]{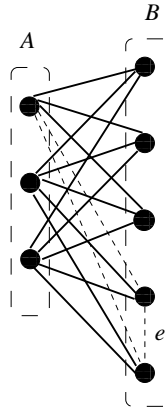}
\end{center}
\caption{Operation $O1$ applied to $K_{3,5}$.}
\label{fig:k35sub42}
\end{figure}

{\bf Claim 2:} Let $G_v$ be a graph obtained from a graph $G\in{\mathcal{L}}$ by applying operation $O2$ and also let $v_1$ and $v_2$ be the vertices of $G_v$ replacing a vertex $v$ of $G$ due to this operation. If each of $v_1$ and $v_2$ is adjacent to at least one vertex of $A$ and at least one vertex of $B$ then $G_e\notin{\mathcal{F}}$.\\
\emph{Proof of Claim 2:}
Clearly $v$ has to be a vertex of $A$ in $G$ since the vertices of $B$ have degree $3$ and thus, Operation $O2$ can not be applied. If we apply Operation $O2$ to $G$ such that each of the vertices $v_1$ and $v_2$ so-created is adjacent to at least one vertex of $A$ and at least one vertex of $B$ then it is not difficult to check that the graph $G_v$ so-obtained will have a subgraph isomorphic to one of the two graphs depicted in Figure~\ref{fig:k35sub22}. It is easy to see that the graph of  Figure~\ref{fig:k35sub22}(i) contains a $K_{3,3}-$minor.
In the remaining case, if $G_v$ has a subgraph isomorphic to the graph depicted in Figure~\ref{fig:k35sub22}(ii) then if we contract the circle consisting of the dashed edges in this graph, we obtain a graph having $K_{3,3}$ as a minor and therefore, $G_v\notin{\mathcal{F}}$. $\blacksquare$ 

\begin{figure}[h] 
\begin{center}
\centering
\psfrag{A}{\footnotesize $A$}
\psfrag{B}{\footnotesize $B$}
\psfrag{v1}{\footnotesize $v_1$}
\psfrag{v2}{\footnotesize $v_2$}
\psfrag{(i)}{\footnotesize $(i)$}
\psfrag{(ii)}{\footnotesize $(ii)$}
\includegraphics*[scale=0.4]{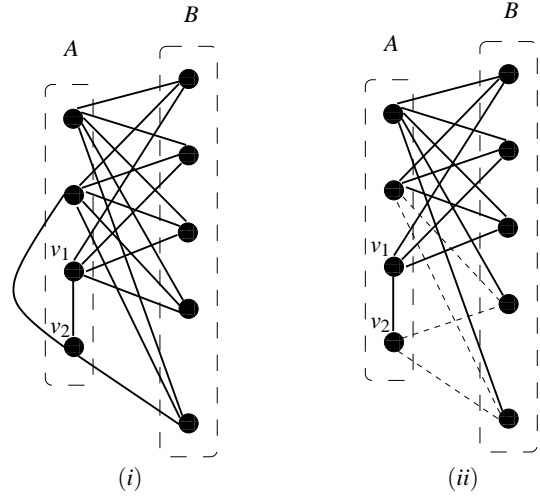}
\end{center}
\caption{Operation $O2$ applied to $K_{3,5}^{+2}$.}
\label{fig:k35sub22}
\end{figure}

Suppose now that we start from $G_{17}\cong{K_{3,5}}$ and apply Theorem~\ref{th_degami} in order to create the class of $3$-connected graphs having $K_{3,5}$ as a minor. Clearly, operations $O1$ and $O2$ have to be applied. Due to Claim~1, in order to produce a graph which may be in $\mathcal{F}$, operation $O1$ must be applied such that the new edge added joins two vertices in the vertex set $A$ of $K_{3,5}$; otherwise, a graph not in $\mathcal{F}$ is created. Because of Claim~2, in order to produce a graph which may be in $\mathcal{F}$, operation $O2$ can take place only after we have obtained a graph being $K_{3,5}^{+2}$ or $K_{3,5}^{+3}$ by a sequence of $O1$ operations; furthermore, the vertex $v$ of $K_{3,5}^{+2}$ or $K_{3,5}^{+3}$ which is replaced by operation $O2$ has to be a vertex of $A$ being adjacent to the other two vertices of $A$. Applying now operation $O2$ on $K_{3,5}^{+2}$ or $K_{3,5}^{+3}$ as described above we obtain  $K_{3,6}$ or $K_{3,6}^{+1}$, respectively, both of which belong to $\mathcal{F}$. Similarly we can apply $O1$ and $O2$ on $K_{3,6}$ or $K_{3,6}^{+1}$ in the way implied by Claims~1 and~2 in order to create $3$-connected graphs  which may belong to $\mathcal{F}$.  Continuing this process we get that all the possible $3$-connected graphs so-obtained constitute a class which is equal to $\mathcal{L}$  and includes $\mathcal{F}$.

It remains to be shown that any member of $\mathcal{L}$ is a member of $\mathcal{F}$. Clearly any graph in $\mathcal{L}$ is $3$-connected and has $G_{17}$ as a minor. We shall show that there is no circle $Y$ of a graph $H\in{\mathcal{L}}$ such that $H/X$ contains  $K_5$ or $K_{3,3}$ as a minor. We firstly show that $H$ does not contain  $K_5$ as a minor which clearly implies that $H/X$ has no $K_5-$minor. Observe that $H$ has  three vertices of degree more than $3$ and that we would like to produce by a sequence of edge and vertex deletions and edge contractions the graph $K_5$ which has five vertices of degree $4$. Clearly, the deletion of vertices or edges from a graph can not increase the degree of a vertex in the graph so-obtained; on the other hand, we can not obtain a graph from $H$ by contracting edges which will have more than three vertices with degree greater than $3$. We now show that $H/Y$ does not have $K_{3,3}$ as a minor. Observe that any cycle of $H$ contains at least two vertices belonging to the vertex set $A$ of $H$. Thus, in $H/Y$ there are at most two vertices with degree greater than $3$. Evidently, no sequence of deletions and contractions of edges and  deletions of vertices produces a graph which is isomorphic to $K_{3,3}$.  
\end{proof}

In the following Theorem~\ref{th_g191} we identify the members of $\mathcal{G}$ which have a $G_{19}-$minor and probably not a $G_{17}-$minor. As we did in Theorem~\ref{th_g171} regarding the $G_{17}$ case, we use Theorem~\ref{th_degami} in order to identify these members.

\begin{theorem} \label{th_g191}
A graph $G$ is isomorphic to one of the graphs in  $\mathcal{M}=\{K^{-}_{4,4}, K_{4,4}\}$ if and only if $G$ satisfies the following conditions:
\begin{itemize}
\item [(i)] it is $3$-connected,
\item [(ii)]it has a $G_{19}-$minor, and
\item [(iii)] for any circle $X$ of $G$,  the graph $G/X$ has neither a $K_5$ nor a $K_{3,3}$ as a minor.
\end{itemize}
\end{theorem}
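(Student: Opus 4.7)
The plan is to mirror the proof of Theorem~\ref{th_g171}. Let $\mathcal{F}$ denote the class of graphs satisfying conditions (i)--(iii); the goal is to establish $\mathcal{F}=\mathcal{M}$. The easy inclusion $\mathcal{M}\subseteq\mathcal{F}$ would be verified directly: both $K^{-}_{4,4}$ and $K_{4,4}$ are $3$-connected and have $G_{19}\cong K^{-}_{4,4}$ as a minor, while condition (iii) follows from the observation that since both graphs are bipartite on $8$ vertices, every circle has even length at least $4$, so contracting any circle yields at most $5$ vertices. Five vertices cannot carry a $K_{3,3}$-minor (which requires six), and an explicit edge-count shows that the contraction of any $4$-circle leaves only $8$ distinct edges, strictly fewer than the ten edges of $K_5$.

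For the harder inclusion $\mathcal{F}\subseteq\mathcal{M}$, Theorem~\ref{th_degami} would be applied from $G_{19}\cong K^{-}_{4,4}$ (which is not a wheel) to parametrise every $3$-connected graph with a $G_{19}$-minor by sequences of operations $O1$ and $O2$ starting from $K^{-}_{4,4}$. Two claims in the spirit of Claims~1 and~2 of Theorem~\ref{th_g171} would be established. \textbf{Claim 1}: if $G\in\mathcal{M}$ and $G_{e}$ is obtained from $G$ by an $O1$-step adding an edge $e$ other than the missing edge $uv$ of $K^{-}_{4,4}$, then $G_{e}\notin\mathcal{F}$; the justification is that $e$ lies inside a bipartition class, so together with a suitably chosen common neighbour it forms a triangle whose contraction produces a $6$-vertex graph containing $K_{3,3}$ as a spanning subgraph. \textbf{Claim 2}: if $G\in\mathcal{M}$ and $G_{v}$ is obtained from $G$ by any $O2$-step, then $G_{v}\notin\mathcal{F}$; here every vertex of degree at least $4$ in $G$ has degree exactly $4$, so the split must distribute the four neighbours into two pairs, and a specific $4$-circle through one of the new vertices and two vertices of the opposite part can be contracted to yield a $K_{3,3}$ on the natural bipartition.

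The last ingredient would be to note that condition (iii) is anti-monotone under $O1$ and $O2$: any circle of $G$ lifts to a circle of the new graph whose contraction has the original contraction as a subgraph (possibly with one extra pendant vertex), so any $K_{3,3}$- or $K_5$-minor appearing in a circle-contraction is preserved by every further operation. Together with the two claims, this implies that the only sequence of operations that stays inside $\mathcal{F}$ is the single $O1$-step adding $uv$ to $K^{-}_{4,4}$, producing $K_{4,4}$, and that no further step keeps us in $\mathcal{F}$. Therefore $\mathcal{F}\subseteq\{K^{-}_{4,4},K_{4,4}\}=\mathcal{M}$, finishing the proof.

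The principal technical difficulty will be the verification of Claim~2: one must treat the split of a degree-$4$ vertex both in $K^{-}_{4,4}$ (where the missing edge creates asymmetry depending on whether the endpoints of the missing edge lie among the split neighbours) and in $K_{4,4}$, and in each case exhibit a concrete $4$-circle whose contraction contains $K_{3,3}$; keeping the bipartition bookkeeping straight so that all nine required edges appear in the contracted graph is the crux of the argument.
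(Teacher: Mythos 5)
Your proposal follows essentially the same route as the paper: the easy inclusion $\mathcal{M}\subseteq\mathcal{F}$ is checked directly, and the hard inclusion uses Theorem~\ref{th_degami} starting from $K^{-}_{4,4}$ together with two claims showing that any $O1$-step inside a bipartition class and any $O2$-step destroy condition (iii), plus the observation that such failures persist under subsequent operations. The only notable divergence is your $K_5$ check in the easy direction, where the edge count on the contracted graph is actually more careful than the paper's assertion that $K_{4,4}$ itself has no $K_5$-minor (which is false --- contracting three disjoint edges of $K_{4,4}$ yields $K_5$ --- though the theorem is unaffected since condition (iii) only concerns $G/X$ for circles $X$, which all have length at least four).
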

\begin{proof}
Let $\mathcal{H}$ be the class of graphs containing all the graphs satisfying conditions (i), (ii) and (iii) of the theorem.  We shall show that $\mathcal{H}=\mathcal{M}$. Any graph in $\mathcal{L}$ is bipartite and thus, for a graph $G\in\mathcal{L}$ there exists a bipartition of $V(G)$ into two sets, which we denote by $V_1$ and $V_2$, such that no two adjacent vertices of $G$ belong to the same set of the bipartition.
We initially prove two claims.\\
{\bf Claim 1:} Let $G_e$ be a graph obtained from a graph $G\in{\mathcal{M}}$ by applying operation $O1$, where let $e$ be the edge added by this operation. Then, if both end-vertices of $e$ belong to either $V_1$ or $V_2$ then $G_e\notin{\mathcal{H}}$.\\
\emph{Proof of Claim 1:} 
We prove the claim only for the case in which $G$ is isomorphic to $K^{-}_{4,4}$, since the case in which $G$ is isomorphic to $K_{4,4}$ follows easily. There are two non-isomorphic graphs obtained by applying operation $O1$ such that $e$ has both of its end-vertices in $V_1$ or in $V_2$; these graphs are depicted in Figure~\ref{fig:k44ad}. In these graphs, if we contract the circle consisting of the dashed edges then we obtain a graph containing a minor isomorphic to $K_{3,3}$ and thus, the result follows. $\blacksquare$ 
\begin{figure}[h] 
\begin{center}
\centering
\includegraphics*[scale=0.4]{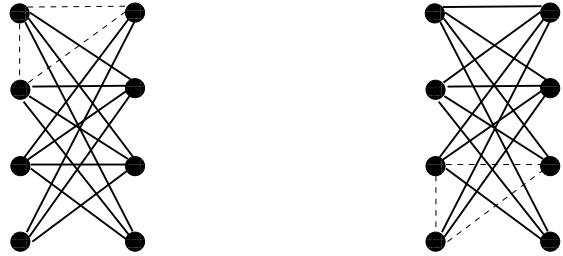}
\end{center}
\caption{Operation $O1$ applied to $K^{-}_{4,4}$.}
\label{fig:k44ad}
\end{figure}  

{\bf Claim 2:} If $G_v$ is a graph obtained from a graph $G\in{\mathcal{M}}$ by applying operation $O2$ then $G_v\notin{\mathcal{H}}$.\\
\emph{Proof of Claim 2:} 
If $G$ is $K^{-}_{4,4}$ then any possible application of operation $O2$ to this graph would produce a graph being isomorphic to that depicted in (i) of Figure~\ref{fig:k44spl}. Similarly, if $G$ is $K_{4,4}$ then any possible application of operation $O2$ to this graph would produce a graph being isomorphic to that depicted in (ii) of Figure~\ref{fig:k44spl}. In these graphs, if we contract the cycle consisting of the dashed edges then, in each case, we obtain a graph which has $K_{3,3}$ as a minor and thus, the result follows. $\blacksquare$ 
\begin{figure}[h] 
\begin{center}
\centering
\psfrag{(i)}{\footnotesize $(i)$}
\psfrag{(ii)}{\footnotesize $(ii)$}
\includegraphics*[scale=0.4]{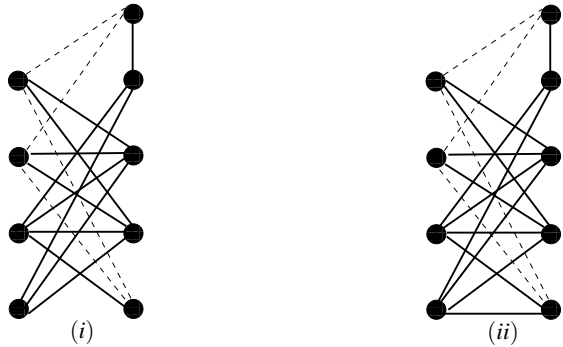}
\end{center}
\caption{Operation $02$ applied to $K^{-}_{4,4}$ and  $K_{4,4}$.}
\label{fig:k44spl}
\end{figure}

In order to produce all the $3$-connected graphs having $G_{19}\cong{K^{-}_{4,4}}$ as a minor Theorem~\ref{th_degami} can be applied. However, due to Claim~2, if we apply operation $O2$ on $K^{-}_{4,4}$ or $K_{4,4}$ then the graph so-obtained is not in $\mathcal{H}$. Moreover, due to Claim~1, Operation $O_1$ may only add an edge joining vertices between different vertices of the vertex bipartition of $K^{-}_{4,4}$ in order to obtain a graph in $\mathcal{H}$. Thus, it is clear that the class of graphs satisfying conditions (i), (ii) and (iii) of the Theorem is a subclass of $\mathcal{M}$. We shall now complete the proof by showing that $K^{-}_{4,4}$ and $K_{4,4}$ are in $\mathcal{H}$. Clearly both are $3$-connected and contain $G_{19}\cong{K^{-}_{4,4}}$ as a minor. Furthermore, any cycle in both graphs consists of at least four vertices and therefore, the contraction of the associated circle gives rise to a graph with at most five vertices which clearly can not have $K_{3,3}$ as a minor. Similarly, it can be easily checked that none of $K^{-}_{4,4}$ and $K_{4,4}$ has $K_{5}$ as a minor. 
\end{proof}

We are ready now to prove the main theorem of this section.
\begin{theorem} \label{th_manama}
A cographic matroid $M$ satisfies the following conditions:
\begin{itemize}
\item[(i)] it is $3$-connected,
\item[(ii)] it has a minor isomorphic to $M^{*}(G_{17})$ or $M^{*}(G_{19})$, and
\item[(iii)] for any $Y\in{\mathcal{C}^{*}(M)}$, $M\backslash{Y}$ is graphic 
\end{itemize}
if and only if $M\cong{M^{*}(G)}$, where $G\in{\{K_{3,n},K^{+1}_{3,n},K^{+2}_{3,n},K^{+3}_{3,n},K^{-}_{4,4}, K_{4,4} \}}$ with $n\geq{5}$.
\end{theorem}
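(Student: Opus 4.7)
The plan is to exploit the reductions already laid out in the paper. Since $M$ is cographic, write $M \cong M^{*}(G)$ for some graph $G$; because $M$ is $3$-connected and Tutte $3$-connectivity coincides with matroid $3$-connectivity, $G$ may be taken to be a $3$-connected simple graph (via Whitney's $2$-isomorphism theorem applied to $M(G)$, which fixes $G$ up to isomorphism). So the problem is to determine exactly which $3$-connected graphs $G$ give a cographic matroid $M^{*}(G)$ meeting conditions (ii) and (iii); this is precisely problem $(P_0)$ from the start of Section~\ref{sec_recal}.

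Next I would invoke the chain of equivalences $(P_0) \Leftrightarrow (P_1) \Leftrightarrow (P_2)$ already justified in the paper. Through duality and the closure of graphic matroids under minors, conditions (ii) and (iii) on $M^{*}(G)$ translate respectively to: (ii$'$) $G$ has a $G_{17}$- or a $G_{19}$-minor, and (iii$'$) for every circle $X$ of $G$, the contraction $G/X$ has neither a $K_5$-minor nor a $K_{3,3}$-minor. At this point the matroid problem has been recast as a purely graph-theoretic problem about $3$-connected graphs satisfying (ii$'$) and (iii$'$).

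Now split into two (non-exclusive) cases based on condition (ii$'$). If $G$ has a $G_{17}$-minor, then by Theorem~\ref{th_g171} we obtain $G \in \{K_{3,n}, K^{+1}_{3,n}, K^{+2}_{3,n}, K^{+3}_{3,n}\}$ for some $n \geq 5$. If $G$ does not have a $G_{17}$-minor but has a $G_{19}$-minor, then Theorem~\ref{th_g191} forces $G \in \{K^{-}_{4,4}, K_{4,4}\}$. In the overlap case (both minors present), Theorem~\ref{th_g191} still pins $G$ down to the same two graphs, so the overall union is exactly $\mathcal{L} \cup \mathcal{M}$ as stated in the theorem. This gives the ``only if'' direction.

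For the ``if'' direction, one simply observes that for each graph $G$ in the listed family, $M^{*}(G)$ is $3$-connected (since $G$ is $3$-connected and matroid $3$-connectivity agrees with graph $3$-connectivity), satisfies (ii) because $G$ has a $G_{17}$- or $G_{19}$-minor (by construction $K_{3,n}$ contains $K_{3,5} \cong G_{17}$ and $K^{-}_{4,4} \cong G_{19}$, and the others inherit these minors), and satisfies (iii) by the ``if'' portion of Theorems~\ref{th_g171} and~\ref{th_g191} combined with the $(P_0) \Leftrightarrow (P_2)$ equivalence. The only real obstacle is the initial justification that one may legitimately identify $M$ with $M^{*}(G)$ for a uniquely determined $3$-connected graph $G$; once that bookkeeping is settled, the theorem is essentially a packaging statement for Theorems~\ref{th_g171} and~\ref{th_g191}.
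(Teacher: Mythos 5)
Your proposal is correct and follows essentially the same route as the paper: reduce to the graph-theoretic problem $(P_2)$ via the chain of equivalences $(P_0)\Leftrightarrow(P_1)\Leftrightarrow(P_2)$ and then invoke Theorems~\ref{th_g171} and~\ref{th_g191} to identify the admissible graphs as $\mathcal{L}\cup\mathcal{M}$. The extra care you take in justifying the identification $M\cong M^{*}(G)$ for a $3$-connected graph $G$ (via Whitney's theorem and the agreement of graph and matroid connectivity) is a bookkeeping point the paper leaves implicit, but it does not alter the argument.
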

\begin{proof}
By  Theorems ~\ref{th_g171} and~\ref{th_g191}, we conclude that the solution to the problem $P_2$ are the graphs in $\mathcal{G}={\{K_{3,n},K^{+1}_{3,n},K^{+2}_{3,n},K^{+3}_{3,n},K^{-}_{4,4}, K_{4,4} \}}$ where  $n\geq{5}$. Since the problem $P_0$ is equivalent to $P_2$ we have that the ``if'' part follows.

For the ``only if'' part, clearly the cographic matroids of the graphs in $\mathcal{G}$ defined in problem $P_0$ are those satisfying the conditions of the theorem. We have shown that $P_0$ is equivalent to $P_2$. Thus, by Theorems ~\ref{th_g171} and~\ref{th_g191}, we have that if $M$ satisfies conditions (i), (ii) and (iii) then $M$ is isomorphic to the cographic matroid associated with a graph in $\mathcal{L}\cup{\mathcal{M}}=\{K_{3,n},K^{+1}_{3,n},K^{+2}_{3,n},K^{+3}_{3,n},K^{-}_{4,4}, K_{4,4} \}$.
\end{proof}

We are now ready to present an algorithm  which given a cographic matroid $M$ with graphic cocircuits determines whether $M$ is signed-graphic or not.

\noindent {\bf \sc{Recognition Algorithm}}
\newline 

{\bf Input:} A cographic matroid $M$ with graphic cocircuits.
 
{\bf Output:} The matroid $M$ is identified as signed-graphic or not. 
\newline

{\bf Step 1.} Decompose $M$  into 3-connected cographic minors $M_1,\ldots,M_l$ ($l\geq{1}$)of $M$  via $1-$ and $2-$sums using the decomposition algorithm provided  in~\cite{Truemper:85,Truemper:98}.

{\bf Step 2.} For each $M_i$ $(i=1,\ldots, l)$, construct the unique up to isomorphism graph $H_i$ such that $M_i=M(H_i)$. This can be done by applying the algorithm appearing in~\cite{Tutte:1960,BixCunn:1980}.

{\bf Step 3.} Test if there exists an $H_i$ being isomorphic to one of the graphs in $\mathcal{G}=\{K_{3,n},K^{+1}_{3,n},K^{+2}_{3,n},K^{+3}_{3,n},K^{-}_{4,4}, K_{4,4} \}$. If yes, then $M$ is not signed-graphic; otherwise, $M$ is signed-graphic.
\newline

Let $A$ be an $m\times{n}$ binary matrix such that $M\cong{M[A]}$ and let $w(A)$ be the number of nonzeros of $A$. Then, there exists an $O((m+n)\cdot{w(A)})$ time algorithm for step 1 (see \cite{Truemper:90}) and an $O(m\cdot w(A))$ time algoritm for step 2 (see \cite{BixCunn:1980}). Checking whether a graph is isomorphic to some graph in $\mathcal{G}$ is easy (i.e. it can be carried out in polynomial time) due to the special structure of the graphs in $\mathcal{G}$; specifically, it is trivial to check if a graph $H_i$ given by step 2 is isomorphic with $K^{-}_{4,4}$ or $K_{4,4}$ while $H_i$ must have $n$ mutually non-adjacent vertices of degree $3$ and a particular adjacency relation between the remaining $3$ vertices in order to be isomorphic to a graph in $\{K_{3,n},K^{+1}_{3,n},K^{+2}_{3,n},K^{+3}_{3,n}\}$.    Regarding the storage of graphs and matrices, the simple data structures used for graphs and matrices in~\cite{Truemper:90} are employed. The proof of correctness of this algorithm goes as follows. 
Since $M^{*}(G_{17})$ and $M^{*}(G_{19})$ are $3$-connected matroids we have that if such a matroid was a minor of $M$ then it must also be a minor of some matroid $M_j$ in $\{M_1,\ldots,M_l\}$ (see~\cite{Oxley:06,Truemper:98}). Therefore, $M$ has an $M^{*}(G_{17})-$ or an $M^{*}(G_{19})-$minor if and only if some $M_j$ in $\{M_1,\ldots,M_l\}$ has such a minor. Since, by Lemma~\ref{lem_1}, ``having graphic cocircuits'' is a minor-closed property we have that each of $M_1,\ldots,M_l$ has graphic cocircuits. Thus, by Theorem~\ref{th_manama}, $M_j$ has a minor isomorphic to $M^{*}(G_{17})$ or $M^{*}(G_{19})$ if and only if $M_j$ is isomorphic to some cographic matroid associated with a graph in $\{K_{3,n},K^{+1}_{3,n},K^{+2}_{3,n},K^{+3}_{3,n},K^{-}_{4,4}, K_{4,4} \}$ (where  $n\geq{5}$). 

\bibliographystyle{eptcs}

\end{document}